\documentclass[11pt,reqno]{amsart}

\usepackage[T1]{fontenc}
\usepackage{lmodern}
\usepackage{microtype}
\usepackage{amsmath,amssymb,amsthm,mathtools}
\usepackage{mathrsfs}
\usepackage{enumitem}
\usepackage{comment}
\usepackage[colorlinks=true,linkcolor=blue,citecolor=blue,urlcolor=blue]{hyperref}

\usepackage{mathtools}
\mathtoolsset{showonlyrefs}

\hypersetup{
  pdftitle={Black hole uniqueness via singular harmonic maps},
  pdfauthor={Qing Han, Marcus Khuri, Gilbert Weinstein, and Jingang Xiong},
  pdfsubject={Black hole uniqueness and logarithmic angle defects for singular harmonic maps},
  pdfkeywords={black hole uniqueness, stationary vacuum black holes, singular harmonic maps, horizon rods, logarithmic angle defects, maximum principle}
}

\allowdisplaybreaks
\numberwithin{equation}{section}

\newtheorem{theorem}{Theorem}[section]
\newtheorem{proposition}[theorem]{Proposition}
\newtheorem{lemma}[theorem]{Lemma}
\newtheorem{corollary}[theorem]{Corollary}

\newtheorem{remark}[theorem]{Remark}

\newcommand{\R}{\mathbb R}
\newcommand{\Hh}{\mathbb H}

\newcommand{\dd}{\,d}
\newcommand{\cJ}{\mathcal J}
\newcommand{\cZ}{\mathcal Z}

\newcommand{\eps}{\varepsilon}
\newcommand{\grad}{\nabla}
\newcommand{\sgn}{\operatorname{sgn}}

\newcommand{\SO}{\operatorname{SO}}

\title[Black Hole Uniqueness]{Kerr Black Hole Uniqueness}

\author[Q. Han]{Qing Han}
\address{Department of Mathematics, University of Notre Dame, Notre Dame, IN 46556, USA}
\email{qhan@nd.edu}

\author[M. Khuri]{Marcus Khuri}
\address{Department of Mathematics, Stony Brook University, Stony Brook, NY 11794, USA}
\email{marcus.khuri@stonybrook.edu}

\author[G. Weinstein]{Gilbert Weinstein}
\address{Department of Mathematics and Department of Physics, Ariel University, Ariel, 40700, Israel}
\email{gilbertw@ariel.ac.il}

\author[J. Xiong]{Jingang Xiong}
\address{School of Mathematical Sciences, Laboratory of Mathematics and Complex Systems, MOE, Beijing Normal University, Beijing 100875, China}
\email{jx@bnu.edu.cn}

\thanks{Q. Han acknowledges the support of NSF Grant DMS-2305038. M. Khuri acknowledges the support of NSF Grant DMS-2405045. J. Xiong acknowledges the partial support of NSFC Grants 12325104. G. Weinstein acknowledges the support of ISF Grant 1119/26.}

\begin{document}

\begin{abstract}
We prove the black hole uniqueness conjecture in the axially symmetric, stationary vacuum setting: there is no regular, asymptotically flat equilibrium configuration with more than one horizon component. More precisely, we establish that in every such multi-horizon configuration, the logarithmic angle defect along every bounded axis rod is negative, which implies that the net interaction force across each such segment is attractive. The proof is based on a refined asymptotic analysis of the associated singular harmonic maps, partially drawn from the companion work \cite{HKWXAsymptotic}, together with a global maximum principle bound for the Weyl conformal factor arising from a novel scalar curvature related differential inequality.
\end{abstract}

\maketitle

\section{Introduction}
\label{sec1}

The black hole uniqueness problem asks whether every regular, stationary vacuum black hole spacetime must belong to the two-parameter Kerr family of solutions, completely characterized by its mass and angular momentum measured at infinity. This problem naturally divides into two stages. The first is the rigidity problem: proving that any stationary black hole spacetime necessarily possesses an additional, rotational Killing symmetry which commutes with the time symmetry. Hawking's rigidity theorem establishes this under the hypothesis of real analyticity. Major progress toward removing analyticity was achieved by Alexakis--Ionescu--Klainerman, who established rigidity under various assumptions \cite{AIK2010a}, including closeness to Kerr \cite{AIK2010b} and small angular momentum \cite{AIK2014}. The second stage is the classification problem within the stationary and axisymmetric class. When the event horizon is connected, classical uniqueness theorems \cite{Carter1973,Robinson1975,Mazur1982} identify the spacetime uniquely as a member of the Kerr family.

In this paper, we resolve the long-standing multi-horizon balance problem in the stationary and axisymmetric vacuum setting. Specifically, we prove a strict sign theorem for the logarithmic angle defects: every bounded axis segment separating two consecutive horizon components has a negative logarithmic angle defect. Equivalently, the interaction force supported on each such bounded axis component is attractive. As a consequence, any stationary, axisymmetric vacuum spacetime containing more than one horizon component necessarily exhibits a conical singularity along every bounded axis segment, and therefore cannot represent a regular equilibrium configuration. This establishes the nonexistence of multi-black hole vacuum equilibria. Our result is stronger than the minimal requirement for uniqueness, which only requires that at least one bounded axis segment carry a conical singularity.

The classical Bach--Weyl analysis \cite{BachWeyl1922,BachWeyl2012Reprint} demonstrated that in the static two-body setting, a conical singularity, interpreted physically as a strut, must appear on the axis between the bodies. In the rotating setting, Weinstein's singular harmonic map construction \cite{Weinstein1990,Weinstein1992} produces, for any finite number of nondegenerate horizons, an axisymmetric stationary vacuum metric that is smooth away from possible conical singularities on the axis. An early nonexistence result under an additional equatorial symmetry hypothesis was obtained by Li and Tian \cite{LiTian1991}. The axis and pole regularity theory and the associated force analysis were systematically developed in \cite{Weinstein1990,LiTian1992,LiTian1993,Weinstein1994}. For two horizon components, the nonexistence of equilibrium was proven by Neugebauer and Hennig in the subextreme case \cite{NeugebauerHennig2009} and extended to the extreme case in \cite{HennigNeugebauer2011}; see also the two-Kerr singularity analysis of Chru\'sciel, Eckstein, Nguyen, and Szybka \cite{CENS2011}. They relied on the complete integrability of the Ernst system and the inverse scattering method to construct explicit solutions. That technique faces insurmountable complexities for $N\geq 3$ horizon components. The nonexistence problem for an arbitrary finite number of horizon components has remained open.

We now describe the geometric and analytic framework. Let $\Gamma=\{\rho=0\}\subset\R^3$ denote the rotation axis in cylindrical coordinates $(\rho,z,\phi)$, with $\phi\in [0,2\pi)$. On the domain of outer communications away from the orbit space boundary, the spacetimes under consideration have topology $\mathbb{R}\times (\mathbb{R}^3 \setminus\Gamma)$, and their stationary axisymmetric vacuum metrics take the Weyl--Papapetrou form
\begin{equation}\label{eq:intro-WP-metric}
\mathbf g=-e^{2U}dt^2
+\rho^2e^{-2U}(d\phi+w\,dt)^2
+e^{-2U+2\alpha}(d\rho^2+dz^2).
\end{equation}
Here $\partial_t$ and $\partial_\phi$ denote the stationary and rotational Killing fields, respectively. Setting
\[
u=U-\ln\rho,
\]
the vacuum Einstein equations reduce to the axisymmetric harmonic map system
\begin{equation}\label{eq:intro-harmonic-map-system}
\Delta u=2e^{4u}|\grad v|^2,
\quad
\Delta v+4\grad u\cdot\grad v=0,
\end{equation}
for a map
\[
\Phi=(u,v):\R^3\setminus\Gamma\longrightarrow\Hh^2,
\quad
ds_{\Hh^2}^2=du^2+e^{4u}dv^2,
\]
where $v$ is the twist potential and $\Delta$ is the standard Euclidean Laplacian on $\R^3$. Once $\Phi$ is determined, the remaining metric potentials are recovered, up to additive constants, from
\begin{equation}\label{eq:intro-w-quadrature}
w_\rho=2\rho e^{4u}v_z,\quad
w_z=-2\rho e^{4u}v_\rho,
\end{equation}
and
\begin{equation}\label{eq:intro-alpha-quadrature}
\alpha_\rho=\rho\left[U_\rho^2-U_z^2+e^{4u}(v_\rho^2-v_z^2)\right],\quad
\alpha_z=2\rho\left[U_\rho U_z+e^{4u}v_\rho v_z\right].
\end{equation}
The integrability conditions for \eqref{eq:intro-w-quadrature} and \eqref{eq:intro-alpha-quadrature} follow from \eqref{eq:intro-harmonic-map-system}. The asymptotically flat normalization fixes the gauge constants uniquely by requiring $w\to0$ and $\alpha\to0$ as $r=\sqrt{\rho^2+z^2}\to\infty$. Consequently, the rod structure, horizon angular momenta, logarithmic angle defects, and interaction forces are entirely determined by the singular harmonic map \cite{Carter1973,Weinstein1990}.

The boundary $\{\rho=0\}$ of the orbit half-plane decomposes into an alternating sequence of horizon and axis intervals. Let
\begin{equation}\label{eq:intro-horizon-components}
\mathcal H_i=[z_i^- ,z_i^+],
\quad
z_i^-\leq z_i^+<z_{i+1}^-,
\quad 1\leq i\leq N,
\end{equation}
be the ordered horizon rods. If $z_i^-<z_i^+$, the interior of $\mathcal H_i$ corresponds to a nondegenerate horizon rod. The Killing field
\[
\partial_t+\Omega_i\partial_\phi,
\quad
\Omega_i=-w|_{\mathrm{Int}\,\mathcal H_i},
\]
is null on this component, where the constant $\Omega_i$ represents the horizon angular velocity, and the endpoints $(0,z_i^-)$ and $(0,z_i^+)$ correspond to the south and north poles of the horizon 2-sphere. If $z_i^-=z_i^+=z_i$, the horizon is degenerate (extremal) and is represented in the orbit space by the puncture $p_i=(0,z_i)\in\R^3$. The complementary open intervals
\begin{equation}\label{eq:intro-rods}
\Gamma_1=(-\infty,z_1^-),\quad
\Gamma_{j+1}=(z_j^+ , z_{j+1}^-),\quad 
\Gamma_{N+1}=(z_N^+ ,\infty),
\quad 1\leq j\leq N-1,
\end{equation}
form the axis rods. On each $\Gamma_j$, the rotational Killing field $\partial_\phi$ vanishes, and the twist potential is constant:
\begin{equation}\label{eq:intro-twist-potential-constant}
v=c_j \quad \text{on }\Gamma_j.
\end{equation}
Under the orientation convention of \cite{HKWXAsymptotic,HKWXMass}, the Komar angular momentum of the $i$th horizon component is given by the jump
\begin{equation}\label{eq:intro-angular-momentum-jump}
\cJ_i=\frac{c_{i+1}-c_i}{4}.
\end{equation}
An illustrative configuration containing one degenerate horizon and two nondegenerate horizons is shown in Figure~\ref{conf}.

\begin{figure}[htbp]
\centering
\includegraphics[height=8.5cm]{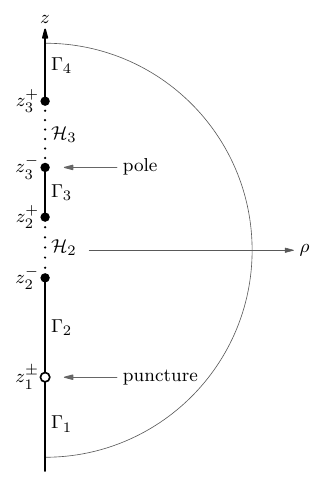}
\caption{Orbit space rod structure for a configuration with one degenerate horizon puncture ($z_1$) and two nondegenerate horizon rods ($\mathcal{H}_1, \mathcal{H}_3$).}
\label{conf}
\end{figure}

The \textit{logarithmic angle defects} $\mathbf{b}_j$ are geometric invariants of the reconstructed metric. Open axis regularity combined with \eqref{eq:intro-alpha-quadrature} implies that $\alpha$ assumes a constant value on each axis rod. For any $z_0\in\Gamma_j$, computing the ratio of circumference to radius from \eqref{eq:intro-WP-metric} yields
\begin{equation}\label{eq:intro-defect-alpha-trace}
e^{\mathbf b_j}
:=\lim_{\rho\rightarrow 0}
\frac{\displaystyle\int_0^\rho e^{\alpha(s,z_0)-U(s,z_0)}\,ds}{\rho e^{-U(\rho,z_0)}}
=e^{\alpha(0,z_0)},
\end{equation}
so that $\mathbf b_j=\alpha|_{\Gamma_j}$. The corresponding interaction force \cite{Weinstein1990} is given by
\begin{equation}\label{eq:intro-force}
\mathcal F_j=\frac14\left(e^{-\mathbf b_j}-1\right).
\end{equation}
Hence, $\mathbf b_j<0$ corresponds to a positive, that is attractive, interaction force along the axis rod $\Gamma_j$. Since $\alpha\to0$ at spatial infinity, the semi-infinite rods satisfy:
\begin{equation}\label{eq:intro-outer-defects}
\mathbf b_1=\mathbf b_{N+1}=0.
\end{equation}

We refer to a map $\Phi$ satisfying \eqref{eq:intro-harmonic-map-system} with the prescribed rod data as a \emph{stationary vacuum singular harmonic map} if it satisfies: the asymptotic behavior at regular axis points and nondegenerate poles established in \cite{LiTian1991,LiTian1992,LiTian1993,Weinstein1990,Weinstein1992,Weinstein1995}, the standard smooth horizon boundary conditions on the interiors of nondegenerate horizons, and the puncture and spatial infinity asymptotics formulated in \cite{HKWXAsymptotic}. The precise analytic estimates used in our proof are summarized in Proposition~\ref{prop:inputs}. The metric reconstructed from such a map is Ricci flat on the domain of outer communications, but is allowed \emph{a priori} to carry conical singularities along the bounded axis rods. The balance problem consists in deciding whether parameters can be chosen so that these singularities vanish simultaneously.

\begin{theorem}\label{thm:main}
Let $N\geq 2$. Consider an ordered collection $\mathcal H_1,\ldots,\mathcal H_N$ of horizon components, and let
\[
\Phi=(u,v):\R^3\setminus\Gamma\longrightarrow\Hh^2
\]
be a stationary vacuum singular harmonic map with this rod data. Let $w$, $\alpha$, and the Weyl--Papapetrou metric $\mathbf g$ be reconstructed from $\Phi$ via \eqref{eq:intro-WP-metric}--\eqref{eq:intro-alpha-quadrature} subject to the asymptotically flat normalization. Assume that
\[
\cJ_i\neq0
\]
for every degenerate horizon component. Then every bounded axis rod carries a strictly negative logarithmic angle defect:
\begin{equation}\label{eq:main-defect}
\mathbf b_j<0 \quad \text{for all } 2\leq j\leq N.
\end{equation}
\end{theorem}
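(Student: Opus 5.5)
Our plan is to prove \eqref{eq:main-defect} with a maximum principle argument for a suitably normalized conformal factor, and not by computing the line integrals of \eqref{eq:intro-alpha-quadrature} directly. The abstract points to this route: a global bound for the Weyl conformal factor obtained from a scalar-curvature type differential inequality.

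\textbf{Step 1 (the auxiliary function).} On the orbit half-plane $\{\rho>0\}$ we introduce
\[
\psi=\alpha-\alpha_0,
\]
where $\alpha_0$ is the conformal factor of a reference configuration with the same rod structure. A natural choice is the static Weyl (multi-Schwarzschild) potential built from the horizon rods, which is explicit and whose $\alpha_0$ on the bounded axis rods is known and has the correct sign. Alternatively one can take $\psi=\alpha-\lambda(U)$ for a suitable function $\lambda$.

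\textbf{Step 2 (the differential inequality).} Using \eqref{eq:intro-harmonic-map-system} and \eqref{eq:intro-alpha-quadrature}, we compute $\Delta_{(\rho,z)}\alpha$. The integrability identity gives
\[
\alpha_{\rho\rho}+\alpha_{zz}=-\bigl(|\nabla U|^2-e^{4u}|\nabla v|^2\bigr)+\text{(terms from }U_\rho/\rho).
\]
This is, up to sign, the Gauss curvature of the quotient $2$-metric $e^{2\alpha}(d\rho^2+dz^2)$, which reflects the scalar curvature interpretation. The aim is to combine it with the harmonic map energy density into an inequality of the form $L\psi\ge 0$ (or $\le0$) for a degenerate elliptic operator $L$ on the half-plane, with no zeroth-order term of bad sign. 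The twist term $e^{4u}|\nabla v|^2$ is what allows rotation without destroying the sign. The key computation is to choose the combination, probably $\alpha$ minus a multiple of $U$ plus $\ln$ of a harmonic-rod function, so that the cross terms form a perfect square.

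\textbf{Step 3 (boundary behavior and maximum principle).} Using the asymptotics collected in Proposition~\ref{prop:inputs}, we check the following:
\begin{itemize}[label={}]
\item $\psi\to0$ at infinity.
\item $\psi$ is bounded near nondegenerate poles and degenerate punctures. At punctures the hypothesis $\cJ_i\neq0$ enters: it gives the near-horizon extremal Kerr asymptotics from \cite{HKWXAsymptotic}, and without it a puncture could behave like a point mass with an unbounded $\alpha$.
\item On the interiors of horizon rods $\psi$ satisfies a Neumann or oblique condition that excludes an extremum there, via Hopf's lemma applied in the $3$-dimensional axisymmetric picture.
\item $\psi$ is constant on each axis rod, equal to $\mathbf b_j-\alpha_0|_{\Gamma_j}$.
\end{itemize}
A maximum principle on $\R^3\setminus\Gamma$, with the axis points treated as removable through the axisymmetric lift, then yields $\psi\le0$. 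This gives $\mathbf b_j\le \alpha_0|_{\Gamma_j}<0$, or at least $\mathbf b_j\le0$.

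\textbf{Step 4 (strictness).} If $\mathbf b_j=0$ on some bounded rod, then $\psi$ attains an interior boundary maximum along an axis rod. The strong maximum principle, or the Hopf lemma applied to the regular axis expansion of $\alpha$, would force $\psi\equiv0$, contradicting $N\ge2$ through the explicit nonzero defect of the reference configuration.

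We expect Step 2, finding the right combination that produces a signed inequality in the rotating case, to be the main obstacle. Controlling $\psi$ at the degenerate punctures in Step 3 is the second delicate point.
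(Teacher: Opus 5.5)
Your proposal has genuine gaps at the two places where the argument has real content.

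\textbf{Step 2 does not produce a signed inequality.} You leave both the operator $L$ and the combination $\psi$ unspecified. Subtracting a reference $\alpha_0$, which satisfies $\Delta_{\rho,z}\alpha_0=-|\nabla U_0|^2$, or a function $\lambda(U)$ leaves terms of indefinite sign. The reference is also problematic in itself: a static multi-Schwarzschild configuration has no counterpart for degenerate (puncture) components.

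The missing ingredients are two facts about $\alpha$ itself. The first is the exact identity
\[
\Delta_{\rho,z}\alpha=-|\nabla U|^2+3e^{4u}|\nabla v|^2 ,
\]
which follows from the quadratures and the harmonic map equations, or from the Hamiltonian constraint. The second is the algebraic consequence of \eqref{eq:intro-alpha-quadrature},
\[
\rho^{-2}|\nabla\alpha|^2=\bigl(|\nabla U|^2-e^{4u}|\nabla v|^2\bigr)^2+4e^{4u}(\nabla U\cdot\nabla v)^2 .
\]
Together they let the bad term $-|\nabla U|^2$ be absorbed by a drift, giving
\[
\Delta_{\rho,z}\alpha+\rho^{-1}|\nabla\alpha|\geq 2e^{4u}|\nabla v|^2\geq 0 .
\]
No reference configuration is needed.

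\textbf{Steps 3 and 4 treat the axis as removable and then invoke Hopf.} The axis would be removable for the three-dimensional Laplacian. However, the operator acting on $\alpha$ is $\Delta_{\rho,z}=\Delta_{\R^3}-\rho^{-1}\partial_\rho$ plus a drift of size $\rho^{-1}$. For this operator the axis rods are genuine boundary, and their boundary values are exactly the unknowns $\mathbf b_j$. The maximum principle on exhaustions $\{\rho>\eta\}$ therefore gives only $\alpha\leq\max_k\mathbf b_k$. It does not give $\psi\leq 0$, and the comparison $\mathbf b_j\leq\alpha_0|_{\Gamma_j}$ is unsupported.

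Strictness cannot come from Hopf or the strong maximum principle either. The function $\alpha=-\rho^2$ satisfies $\Delta_{\rho,z}\alpha+\rho^{-1}|\nabla\alpha|=0$ and attains its maximum along the whole axis without being constant. Consistently with this, $\alpha-\mathbf b_j=O(\rho^2)$ near every axis rod, so there is no normal-derivative information to exploit.

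\textbf{What the paper supplies instead.} It uses a separate, structure-dependent local argument, which runs as follows.
\begin{enumerate}[label=\textup{(\arabic*)}]
\item Since $U(0,z)\to-\infty$ at both endpoints of a bounded rod, the axis trace of $U$ has an interior maximum point $q$.
\item One uses the smooth axis factorization $U=F(\rho^2,z)$, $v=c+\rho^4G(\rho^2,z)$.
\item One expands $U$, and the seven-dimensional lift of $(v-c)/\rho^4$, in zonal harmonic polynomials. Unique continuation excludes infinite-order vanishing of the lift.
\item Integrating $\alpha_\rho$ from \eqref{eq:intro-alpha-quadrature} along suitable segments then produces points near $(0,q)$ with $\alpha>\mathbf b_j$.
\end{enumerate}
Combined with $\alpha\leq\max_k\mathbf b_k$, this shows the maximum is never attained on a bounded rod. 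Hence the maximum equals $\mathbf b_1=\mathbf b_{N+1}=0$, and so $\mathbf b_j<0$ for $2\leq j\leq N$.
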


In the special case where every horizon component is degenerate, Theorem~\ref{thm:main} yields the conjectured mass-angular momentum inequality $m\geq\sqrt{|\cJ|}$ for multiple dynamical black holes. Like the Penrose inequality, this relation was conjectured based on Penrose's cosmic censorship and collapse heuristics \cite{Penrose}; equality holds if and only if the data arise from a constant-time slice of an extreme Kerr spacetime. This result will be presented in detail in a forthcoming paper \cite{HKWXMass}.

Our proof is based on a PDE approach, and does not rely on the complete integrability of the equations. The argument hinges on two main components.
We first prove a global upper bound for $\alpha$,  using a new algebraic identity which yields the nonlinear differential inequality
\begin{equation}\label{eq:intro-key}
\Delta_{\rho,z}\alpha+\rho^{-1}|\grad\alpha| \geq 2e^{4u}|\grad v|^2 \geq 0,
\end{equation}
where $\Delta_{\rho,z} = \partial_\rho^2 + \partial_z^2$. Applying the maximum principle on an exhaustion of domains avoiding the axis, and controlling the boundary behavior at both horizon poles \cite{LiTian1992} and punctures \cite{HKWXAsymptotic}, we establish the global upper bound
\begin{equation}\label{eq:intro-global-bound}
\alpha \leq \max_{1\leq j\leq N+1}\mathbf b_j.
\end{equation}

We next prove that the maximum angle defect cannot be achieved along any bounded axis rod. Starting from the known $C^{3,\gamma}$ regularity across an open axis rod, the system forces the factorization
\begin{equation}\label{eq:intro-factorization}
U(\rho,z)=F(\rho^2,z),
\quad
v(\rho,z)=c+\rho^4 G(\rho^2,z),
\end{equation}
with $F$ and $G$ smooth. This is achieved by lifting $(v-c)/\rho^4$ radially to six transverse dimensions. Because the trace $U(0,z)$ tends to $-\infty$ at both endpoints of any bounded axis rod, it must attain an interior maximum along $\Gamma_j$. Combining \eqref{eq:intro-factorization} with Taylor expansions near this maximum, we deduce that $\alpha>\mathbf b_j$ at neighboring points unless the leading coefficients of both $U$ and $v$ vanish identically.  If $v$ is constant, the conclusion follows directly from harmonicity of $U$. When $v$ is non-constant, strong unique continuation together with the classification of rotationally symmetric homogeneous harmonic polynomials in dimensions three and seven identifies the leading non-vanishing powers. A careful analysis of $U$ and $v$ then shows that again
\begin{equation}\label{strictineq1}
\alpha > \mathbf b_j
\end{equation}
at points near the interior maximum point of $U$ on $\Gamma_j$.

Inequality \eqref{strictineq1} implies that $\max_k \mathbf b_k$ cannot be attained on a bounded axis rod. Since $\mathbf b_1 = \mathbf b_{N+1} = 0$, we conclude that every bounded rod must satisfy $\mathbf b_j < 0$, completing the proof.

The paper is organized as follows.
Section~\ref{sec:setup} recalls the relevant asymptotic analysis of the harmonic map equations.  The regular axis factorization \eqref{eq:intro-factorization} is obtained in Section \ref{sec:axis}, while Section~\ref{sec:nonlinear} is dedicated to the proof of the differential inequality \eqref{eq:intro-key}, and global upper bound \eqref{eq:intro-global-bound}. In Section~\ref{sec:strict} we establish the strict local inequality \eqref{strictineq1} near bounded axis rods, and Section~\ref{sec:proof-main} contains the proof of the main theorem.

\subsection*{Acknowledgements}
The authors thank Demetrios Christodoulou, who introduced the third author to this problem more than four decades ago. This project was supported in part by a SQuaRE workshop at the American Institute of Mathematics (AIM); the authors thank AIM for its hospitality and for providing a supportive and mathematically rich environment. The authors also thank Nan Wu for helpful discussions.

\subsection*{AI Disclosure} 
Various AI models were used in the development of this paper for computation, validation, and exploration.
In particular, one of the models suggested ideas which eventually led to identity \eqref{eq:alpha-laplacian} and inequality \eqref{eq:key-inequality}.

\section{The Singular Harmonic Map}\label{sec:setup}

In this section, we collect known results concerning the relevant asymptotics for the singular harmonic maps at the interior of axis rods, at poles and punctures, and at infinity. Recall for later use that if $h=h(\rho,z)$ is an axisymmetric function then
\[
 \Delta h=h_{\rho\rho}+\frac1\rho h_\rho+h_{zz},
 \quad
 \Delta_{\rho,z}h=h_{\rho\rho}+h_{zz},
 \quad
 |\grad h|^2=h_\rho^2+h_z^2,
\]
where $\Delta$ is the Euclidean Laplacian on $\R^3$ and $\Delta_{\rho,z}$ is the ordinary Laplacian on the orbit half-plane.
In particular, since $\Delta\ln\rho=0$ for $\rho>0$, the harmonic map equations \eqref{eq:intro-harmonic-map-system} can be expressed as
\begin{align}
 U_{\rho\rho}+\frac1\rho U_\rho+U_{zz}
 &=2e^{4u}(v_\rho^2+v_z^2),\label{eq:HM-U}\\
 v_{\rho\rho}+\frac1\rho v_\rho+v_{zz} &=
 -4\bigl((U_\rho-\rho^{-1})v_\rho+U_zv_z\bigr).
 \label{eq:HM-v-U}
\end{align}

\subsection{Set up and hypotheses}\label{hypotheses}
Let $\Phi=(u,v)\in H^1_{\mathrm{loc}}
 \bigl(\R^3\setminus\Gamma,\Hh^2\bigr)$
be an axisymmetric map satisfying the harmonic map equations \eqref{eq:intro-harmonic-map-system} weakly.
The energy and renormalized energy densities are, respectively,
\[
 e(u,v)=|\nabla u|^2+e^{4u}|\nabla v|^2,
 \quad
 e'(U,v)=|\nabla U|^2+e^{4u}|\nabla v|^2.
\]
Suppose that the $z$-axis $\Gamma$ has the finite mixed black hole rod
structure described in Section \ref{sec1}, with axis rods
$\Gamma_1,\ldots,\Gamma_{N+1}$ and horizon components
$\mathcal H_1,\ldots,\mathcal H_N$.  Let
$\mathcal I_{\mathrm{deg}}$ and $\mathcal I_{\mathrm{nd}}$ denote,
respectively, the sets of degenerate and nondegenerate horizon
components.  Consider the following three assumptions.

\begin{enumerate}[label=\textup{(A\arabic*)}]

\item\label{ass:local-regularity}
The appropriate local renormalizations are bounded and have finite
renormalized energy at ordinary axis points, at interiors of
nondegenerate horizon rods, and at their poles.  More precisely, the following conditions hold.

\begin{enumerate}[label=\textup{(\alph*)}]

\item
For every compact interval $I\subset\Gamma_j$ contained in an axis
rod, there is a neighborhood $\mathcal O_I \subset\mathbb{R}^3$ of $I$ such that
\begin{equation}\label{eq:axis-input-hypotheses}
 U,v\in L^\infty(\mathcal O_I\setminus\Gamma),
 \quad
 e'(U,v)\in L^1(\mathcal O_I),
\end{equation}
and
\[
 v=c_j\quad\text{on }I
\]
in the trace sense.

\item
For every $i\in\mathcal I_{\mathrm{nd}}$ and every compact interval
$I\subset\text{Int}\mathcal{H}_i$ contained in the interior of
the horizon rod, there is a neighborhood $\mathcal O_I \subset\mathbb{R}^3$ of
$I$ such that
\begin{equation}\label{eq:horizon-input-hypotheses}
 u,v\in L^\infty(\mathcal O_I\setminus\Gamma),
 \quad
 e(u,v)\in L^1(\mathcal O_I).
\end{equation}

\item
If $q_i^\pm=(0,0,z_i^\pm)$
are poles of a nondegenerate horizon rod, write
\[
 \zeta_i^\pm=z-z_i^\pm,
 \quad
 r_i^\pm=\sqrt{\rho^2+(\zeta_i^\pm)^2},
\]
and define the pole-renormalized functions
\begin{align}
 \widehat U_i^-
 &=U-\frac12\ln(r_i^--\zeta_i^-)
   =u+\frac12\ln(r_i^-+\zeta_i^-),
 \label{eq:lower-pole-renormalization}\\
 \widehat U_i^+
 &=U-\frac12\ln(r_i^++\zeta_i^+)
   =u+\frac12\ln(r_i^+-\zeta_i^+).
 \label{eq:upper-pole-renormalization}
\end{align}
In a neighborhood of $q_i^\pm$, the pair
$(\widehat U_i^\pm,v)$ is locally bounded and satisfies
\begin{equation}\label{eq:pole-input-hypotheses}
 |\nabla\widehat U_i^\pm|^2
 +e^{4u}|\nabla v|^2\in L^1.
\end{equation}

\end{enumerate}

\item\label{ass:puncture}
For every $i\in\mathcal I_{\mathrm{deg}}$, let
$p_i=(0,0,z_i)$
and introduce polar coordinates $(r_i,\theta_i,\phi)$ centered at
$p_i$.  There are
constants $\varepsilon_i,\Lambda_i>0$ such that
\begin{equation}\label{eq:puncture-input-hypotheses}
 |u+\ln\sin\theta_i|\leq\Lambda_i
 \text{ in }B_{\varepsilon_i}(p_i)\setminus\Gamma,
 \quad
 e'(U,v)\in
 L^1_{\mathrm{loc}}
 \bigl(B_{\varepsilon_i}(p_i)\setminus\{p_i\}\bigr).
\end{equation}
Moreover,
\[
 v=c_i\text{ on the axis rod below }p_i,
 \quad
 v=c_{i+1}\text{ on the axis rod above }p_i
\]
in the trace sense, where
\begin{equation}\label{eq:puncture-trace-jump}
 c_{i+1}-c_i=4\cJ_i\neq0.
\end{equation}

\item\label{ass:infinity}
There are constants $r_0,\Lambda_\infty>0$ such that
\begin{equation}\label{eq:infinity-input-hypotheses}
 |U|+|v|\leq\Lambda_\infty
 \text{ in }\overline B_{r_0}^{\,c}\setminus\Gamma,
 \quad
 e'(U,v)\in
 L^1_{\mathrm{loc}}\bigl(\overline B_{r_0}^{\,c}\bigr),
\end{equation}
and $v$ has constant traces $c_1$ and $c_{N+1}$ on the two
semi-infinite axis rods. Moreover, $U\rightarrow 0$ as $r\rightarrow\infty$.

\end{enumerate}

\subsection{Existence and uniqueness} 

\begin{proposition}
\label{prop:existence} Given a rod structure configuration of horizon
components $\{\mathcal{H}_i \}_{i=1}^{N}$, and axis rods $\{\Gamma_j \}_{j=1}^{N+1}$ with
potential constants $\{c_j\}_{j=1}^{N+1}$ such that $\mathcal{J}_i\neq 0$ for each $i\in \mathcal{I}_{\mathrm{deg}}$,
as in \eqref{eq:intro-horizon-components}, \eqref{eq:intro-rods}, 
\eqref{eq:intro-twist-potential-constant}, and \eqref{eq:intro-angular-momentum-jump}, respectively, 
there is a unique axisymmetric harmonic map $\Phi:\R^3\setminus\Gamma\rightarrow\Hh^2$
satisfying $(\mathrm{A}1)-(\mathrm{A}3)$. 
\end{proposition}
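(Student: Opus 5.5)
The plan is to split the statement into existence and uniqueness. Existence comes from Weinstein's variational construction, extended to punctures as in \cite{HKWXAsymptotic}. Uniqueness comes from convexity of the distance function on the negatively curved target $\Hh^2$.

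\emph{Existence.} First build a model map $\Phi_0=(u_0,v_0)$ that already has the prescribed singular behavior. Near a nondegenerate rod $\mathcal H_i$, take $u_0$ equal to the Schwarzschild-type potential, i.e.\ $\tfrac12\ln$ of the ratio of the pole renormalizations in \eqref{eq:lower-pole-renormalization}--\eqref{eq:upper-pole-renormalization}. Near each puncture $p_i$, take $u_0=-\ln\sin\theta_i$ plus the extreme Kerr correction, together with a twist potential $v_0$ interpolating between $c_i$ and $c_{i+1}$ through a smooth function of $\theta_i$. Elsewhere, take $u_0=-\ln\rho$ near the axis rods with $v_0\equiv c_j$ on $\Gamma_j$. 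Glue these pieces with cutoffs, and tend to $U_0\to0$ at infinity.

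Next check that the tension $\tau(\Phi_0)$ is bounded and decays like $r^{-3}$ at infinity. The key point is that $\tau(\Phi_0)$ is compactly supported away from the punctures, and in a controlled weighted space near them. This uses $\cJ_i\neq0$: when $\cJ_i\neq 0$ the extreme Kerr model is an exact harmonic map near $p_i$, whereas for $\cJ_i=0$ the puncture would be removable or inconsistent.

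Then minimize the renormalized energy. On an exhaustion $\Omega_k$ of $\R^3\setminus\Gamma$, solve the Dirichlet problem with boundary data $\Phi_0$. Hamilton's theorem applies since the target is nonpositively curved, giving a solution $\Phi_k$. To get uniform control, use the subharmonicity $\Delta\, d_{\Hh^2}(\Phi_k,\Phi_0)\geq -|\tau(\Phi_0)|$. Combined with a barrier, namely the Newtonian potential of $|\tau(\Phi_0)|$, which is finite by the decay and integrability above, this gives $d(\Phi_k,\Phi_0)\le C$ independent of $k$. Local energy bounds and standard harmonic map regularity then give a subsequence converging to $\Phi$. The bound $d(\Phi,\Phi_0)\le C$ yields (A1)(a),(b) and (A3), including $U\to0$ at infinity, after a decay estimate for the barrier. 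The bound also gives (A2) directly, since $|u-u_0|$ is bounded. The pole and axis energy bounds follow from the Li--Tian / Weinstein regularity theory \cite{LiTian1992,Weinstein1990}, with trace conditions inherited from $\Phi_0$.

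\emph{Uniqueness.} Let $\Phi,\Psi$ both satisfy (A1)--(A3), and set $f=d_{\Hh^2}(\Phi,\Psi)$. By the Jäger--Kaul / Schoen--Yau computation, $f$ is subharmonic on $\R^3\setminus\Gamma$. The hypotheses show $f$ is bounded. Near axis points and poles, both maps share the same renormalization, so the singular parts cancel in the distance. Near punctures, (A2) bounds $u$ for each map. At infinity, (A3) gives boundedness and in fact $f\to0$. Since $\Gamma$ has zero capacity in $\R^3$, a bounded subharmonic function on $\R^3\setminus\Gamma$ extends subharmonically across it. Indeed, test against $\eta_\epsilon$ cutoffs vanishing near $\Gamma$ with $\int|\nabla\eta_\epsilon|^2\to0$ locally, using that $|\nabla f|\le e$-type bounds are locally $L^1$ by (A1)--(A2). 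The maximum principle with $f\to0$ at infinity then gives $f\equiv0$.

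\emph{Main obstacle.} The hardest step is the puncture analysis, in both halves. For existence, the model must make $\tau(\Phi_0)$ sufficiently integrable near $p_i$, and the barrier must stay bounded there. For uniqueness, one needs $f$ bounded near $p_i$ even though $v$ jumps from $c_i$ to $c_{i+1}$. I would try to use the explicit extreme Kerr near-horizon geometry as the model. For the bound near $p_i$, the hyperbolic distance is controlled by $|u_\Phi-u_\Psi|$ plus $e^{2u}|v_\Phi-v_\Psi|$. Here $e^{2u}\sim\sin^{-2}\theta_i$, and both twist potentials vanish to order $\sin^4\theta_i$ relative to their axis values. That vanishing rate should come from the axis factorization in \cite{HKWXAsymptotic}.
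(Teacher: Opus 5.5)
Your outline follows the same route as the paper: build a model map $\Phi_0$ realizing the rod structure, then obtain existence and uniqueness among harmonic maps with $d_{\Hh^2}(\Phi,\Phi_0)$ bounded and decaying at infinity, using the nonpositive curvature of $\Hh^2$; the paper superposes individual Kerr harmonic maps rather than gluing local models with cutoffs, and defers the details to \cite{Weinstein}. Two inferences need a routine repair: since $|v-v_0|\leq\rho^2e^{-U-U_0}\sinh d_{\Hh^2}(\Phi,\Phi_0)$ and the barrier only decays like $r^{-1}$, the distance bound does not by itself give $|v|\leq\Lambda_\infty$ at infinity, so use the maximum principle for $\operatorname{div}(e^{4u_k}\nabla v_k)=0$ on the exhaustion instead; and in the uniqueness step, boundedness of $d_{\Hh^2}(\Phi,\Psi)$ near axis rods and poles (not only at punctures) needs the $O(\rho^2)$ vanishing of $v-c_j$ from the axis and pole regularity theory (Proposition~\ref{prop:inputs}(i) and \cite{LiTian1992}), not merely the $L^\infty$ bounds in (A1).
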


The proof of Proposition \ref{prop:existence} consists of two steps. In the first step, for the given $\{\mathcal{H}_i \}_{i=1}^{N}$, $\{\Gamma_j \}_{j=1}^{N+1}$, $\{c_j\}_{j=1}^{N+1}$, and $\mathcal{J}_i\neq 0$ as above, we prove that there exists a smooth axisymmetric model map $\Phi_0 :\R^3\setminus\Gamma\rightarrow\Hh^2$
realizing the prescribed rod structure and potential constants.
The model map is constructed by superposition of appropriate individual Kerr harmonic maps.
In the second step, we prove that there is a unique axisymmetric harmonic map $\Phi:\R^3\setminus\Gamma\rightarrow\Hh^2$ 
which is asymptotic to $\Phi_0$ in the sense that
\begin{equation}\label{eq:model-map-asymptotic-class}
 \sup_{\R^3\setminus\Gamma}
 d_{\Hh^2}(\Phi,\Phi_0)<\infty,
 \quad
 d_{\Hh^2}(\Phi,\Phi_0)\rightarrow0
 \quad\text{as }r\rightarrow\infty,
\end{equation}
and also satisfies $(\mathrm{A}1)-(\mathrm{A}3)$.
Refer to the survey \cite{Weinstein} for details. 

\subsection{Regularity and asymptotics}

\begin{proposition}
\label{prop:inputs}
Let $\Phi=(u,v):\mathbb{R}^3 \setminus\Gamma\rightarrow\mathbb{H}^2$ be a harmonic map satisfying $(\mathrm{A}1)-(\mathrm{A}3)$. 
Then the following statements hold.

\begin{enumerate}[label=\textup{(\roman*)}]

\item\label{concl:open-axis}
\emph{Axis regularity.}
On every compact subinterval of an axis rod $\Gamma_j$, the pair
$(U,v)$ extends across the axis as an axisymmetric
$C^{3,\gamma}$ pair for every $\gamma\in(0,1)$.  

\item\label{concl:puncture}
\emph{Degenerate horizon punctures.}
Let $i\in\mathcal I_{\mathrm{deg}}$. Set
$a_i=2|\cJ_i|$ and $\epsilon_i=\sgn\cJ_i$. 
Then there exists $b_i\in(-1,1)$ such that
\begin{equation}\label{eq:puncture-decomposition}
 U=\ln r_i+\bar U_i(\theta_i)+\widetilde U_i,
 \quad
 v=\bar v_i(\theta_i)+\widetilde v_i,
\end{equation}
where
\begin{align}
 \bar U_i(\theta_i)
 &=-\frac12\ln\left(
 \frac{2a_i\sqrt{1-b_i^2}}{1+\cos^2 \theta_i +2b_i \cos\theta_i}
 \right),
 \label{eq:bar-U-i}\\
 \bar v_i(\theta_i)
 &=\epsilon_i a_i
 \left(\frac{b_i+b_i\cos^2 \theta_i+2\cos\theta_i}{1+\cos^2 \theta_i +2b_i \cos\theta_i}\right)
 +v_i^0,
 \quad
 v_i^0=\frac{c_i+c_{i+1}}2.
 \label{eq:bar-v-i}
\end{align}
Moreover, if
\[
 \bar u_i=\bar U_i-\ln\sin\theta_i,
\]
then for every fixed $\varsigma\in(0,1)$, there are
$\beta_i\in(0,1)$ and $C_i>0$ such that
\begin{equation}\label{eq:puncture-weighted-estimate}
 \max_{\mathbb S^2}
 \left(
 \left|
 (r_i\partial_{r_i})^\ell
 \nabla_{\mathbb S^2}^{\,k}\widetilde U_i
 \right|
 +
 e^{(3+\varsigma-k)\bar u_i}
 \left|
 (r_i\partial_{r_i})^\ell
 \nabla_{\mathbb S^2}^{\,k}\widetilde v_i
 \right|
 \right)
 \leq C_i r_i^{\beta_i},
 \qquad
 \ell+k\leq3.
\end{equation}
In addition, 
\begin{equation}\label{eq:alpha-tangent-error}
 |\alpha-\bar\alpha_i|
 +r_i\left|
 \partial_{r_i}(\alpha-\bar\alpha_i)
 \right|
 +\left|
 \partial_{\theta_i}(\alpha-\bar\alpha_i)
 \right|
 \leq C_i r_i^{\beta_i}
\end{equation}
uniformly for $0\leq\theta_i\leq\pi$, where $\bar\alpha_i$ is given by 
\begin{equation}\label{eq:puncture-alpha-profile}
 \bar\alpha_i(\theta_i)
 =
 \frac{\mathbf b_i+\mathbf b_{i+1}}2
 +\ln\left(
 \frac{1+\cos^2 \theta_i +2b_i \cos\theta_i}{2\sqrt{1-b_i^2}}
 \right).
\end{equation}
Furthermore,
\begin{equation}\label{eq:defect-jump}
 \mathbf b_{i+1}-\mathbf b_i
 =\ln\left(\frac{1+b_i}{1-b_i}\right)
 =2\operatorname{arctanh}b_i.
\end{equation}

\item\label{concl:nondegenerate}
\emph{Nondegenerate horizon poles and interiors.}
Let $i\in\mathcal I_{\mathrm{nd}}$.  There are constants
$\kappa_i^\pm\in\R$ and exponents
$\gamma_i^\pm\in(0,1)$ such that in a neighborhood of the poles
\begin{align}
 U
 =\frac12\ln(r_i^--\zeta_i^-)
   +\kappa_i^-+R_i^-,\quad
 |R_i^-|+r_i^-|\nabla R_i^-|
 \leq C(r_i^-)^{\gamma_i^-},
 \label{eq:lower-pole-U}
\end{align}
and  
\begin{align}
 U
 =\frac12\ln(r_i^++\zeta_i^+)
   +\kappa_i^++R_i^+,
 \quad
 |R_i^+|+r_i^+|\nabla R_i^+|
 \leq C(r_i^+)^{\gamma_i^+},
 \label{eq:upper-pole-U}
\end{align}
while the corresponding Weyl conformal factor $\alpha$ expansions are
\begin{align}
 \alpha(r_i^-,\theta_i^-)
 &=\mathbf b_i+\ln\sin\frac{\theta_i^-}{2}
   +O\bigl((r_i^-)^{\gamma_i^-}\bigr),
 \label{eq:lower-pole-alpha}\\
 \alpha(r_i^+,\theta_i^+)
 &=\mathbf b_{i+1}+\ln\cos\frac{\theta_i^+}{2}
   +O\bigl((r_i^+)^{\gamma_i^+}\bigr).
 \label{eq:upper-pole-alpha}
\end{align}
Furthermore, on every compact subinterval $I\subset\mathrm{Int}\mathcal{H}_i$,
the functions $(u,v)$ extend across the rod
as an axisymmetric smooth pair, and there is a constant $\kappa_i\in\R$, depending only
on the horizon rod $\mathcal H_i$, such that in a neighborhood of $I$,
\begin{equation}\label{eq:horizon-alpha-refined}
 \alpha(\rho,z)
 =\ln\rho+2u(\rho,z)+\kappa_i+O(\rho^2)
 \quad\text{as }\rho\rightarrow 0.
\end{equation}

\item\label{concl:infinity}
\emph{Infinity expansion.}
Set
\[
 \cJ=\frac{c_{N+1}-c_1}{4},
 \quad
 c_\infty=\frac{c_{N+1}+c_1}{2}.
\]
For every fixed $\varsigma\in(0,1)$ there are constants
$c_1^\infty\in\R$ and a remainder $\widetilde v_\infty$ such that
\begin{align}
 U=\frac{c_1^\infty}{r}
   +O_{1}(r^{-2}),
 \quad
 v=\cJ\cos\theta(3-\cos^2\theta)+c_\infty
   +\widetilde v_\infty,
 \label{eq:inf-v}
\end{align}
where
\begin{equation}\label{eq:inf-v-der}
 |\partial_r\widetilde v_\infty|
 \leq Cr^{-2}\sin^{3+\varsigma}\theta,
 \quad
 |\partial_\theta\widetilde v_\infty|
 \leq Cr^{-1}\sin^{2+\varsigma}\theta.
\end{equation}
Moreover, the Weyl conformal factor $\alpha$ satisfies
\begin{equation}\label{eq:inf-alpha}
 \alpha=O(r^{-2}),
 \quad
 \partial_r\alpha=O(r^{-3}),
 \quad
 \partial_\theta\alpha=O(r^{-2}).
\end{equation}
\end{enumerate}
\end{proposition}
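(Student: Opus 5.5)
The plan is to assemble Proposition~\ref{prop:inputs} from known regularity results, treating the four regimes (open axis, punctures, poles/horizon interiors, infinity) separately. In each regime I would first obtain the asymptotics of $(U,v)$ from the harmonic map system \eqref{eq:HM-U}--\eqref{eq:HM-v-U} and the hypotheses \ref{ass:local-regularity}--\ref{ass:infinity}. I would then derive the statements about $\alpha$ by integrating the quadrature \eqref{eq:intro-alpha-quadrature} along suitable paths. The normalization $\alpha\to0$ at infinity and the constancy $\alpha=\mathbf b_j$ on $\Gamma_j$ pin down all additive constants.

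\emph{Part \ref{concl:open-axis}.} On a compact $I\subset\Gamma_j$, \eqref{eq:HM-U} says $U$ solves a Poisson equation in $\R^3$ with right-hand side $2e^{4U}\rho^{-4}|\nabla v|^2$. I would rewrite the $v$-equation for $\psi=(v-c_j)/\rho^4$. The singular drift $4\rho^{-1}v_\rho$ becomes the radial part of a Laplacian in higher dimension, so $\psi$, viewed as a radial function of $\R^{8}\times\R_z$ (or a comparable lift), solves a uniformly elliptic equation with bounded coefficients built from $\nabla U$. Then:
\begin{itemize}
\item the finite renormalized energy and $L^\infty$ bounds in \eqref{eq:axis-input-hypotheses} give $\psi\in L^\infty$ via a Moser iteration/barrier argument;
\item the right-hand side of the $U$-equation is then $O(\rho^4|\nabla\psi|^2+|\psi|^2\rho^{2})$-type and bounded, so $U\in C^{1,\gamma}$;
\item a Schauder bootstrap of the coupled system yields $C^{3,\gamma}$, which is the regularity of Weinstein and Li--Tian.
\end{itemize}

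\emph{Parts \ref{concl:nondegenerate} and \ref{concl:infinity}.} At nondegenerate poles, the renormalized pair $(\widehat U_i^\pm,v)$ of \eqref{eq:lower-pole-renormalization}--\eqref{eq:upper-pole-renormalization} satisfies a system with the same structure. The Li--Tian pole analysis then gives Hölder continuity of $\widehat U_i^\pm$, which is \eqref{eq:lower-pole-U}--\eqref{eq:upper-pole-U} with $\kappa_i^\pm=\widehat U_i^\pm(q_i^\pm)$.

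Inserting these expansions into \eqref{eq:intro-alpha-quadrature} shows that the leading part of $\alpha$ is the Schwarzschild-type profile $\ln\sin(\theta/2)$, respectively $\ln\cos(\theta/2)$. The remainder is $O(r^{\gamma})$, and the additive constant is identified by approaching along the adjacent axis rod, where $\alpha=\mathbf b_i$ or $\mathbf b_{i+1}$.

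In horizon interiors, $u$ is smooth across $\rho=0$. From the quadrature, $\alpha-\ln\rho-2u$ has $z$-derivative $O(\rho)$ and $\rho$-derivative $O(\rho)$, which yields \eqref{eq:horizon-alpha-refined}.

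At infinity I would use a Kelvin-type inversion. Boundedness from \ref{ass:infinity} plus the harmonic map structure gives $U=c_1^\infty/r+O_1(r^{-2})$. For $v$, I would subtract the explicit axisymmetric solution $\cJ\cos\theta(3-\cos^2\theta)+c_\infty$ of the linearized twist equation, and estimate the remainder $\widetilde v_\infty$ with the weights $\sin^{3+\varsigma}\theta$ by barriers. Integrating \eqref{eq:intro-alpha-quadrature} inward from infinity then gives \eqref{eq:inf-alpha}.

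\emph{Part \ref{concl:puncture}, the main obstacle.} Here I would follow the companion work \cite{HKWXAsymptotic}. Rescale $\Phi$ about $p_i$ by $r\mapsto\lambda r$ and use the bound \eqref{eq:puncture-input-hypotheses} to extract tangent maps. These are homogeneous harmonic maps $\R^3\setminus\Gamma\to\Hh^2$ with prescribed jump $c_{i+1}-c_i=4\cJ_i$, and solving the resulting ODE on $\mathbb S^2$ classifies them as the extreme-Kerr near-horizon family $(\bar U_i,\bar v_i)$, parametrized by $b_i\in(-1,1)$.

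The hard part is uniqueness of the tangent map together with the polynomial rate $r_i^{\beta_i}$. I would try a Simon-type Łojasiewicz argument first, or a spectral-gap analysis of the linearization of the harmonic map operator at $(\bar U_i,\bar v_i)$ in the weighted spaces dictated by $e^{(3+\varsigma-k)\bar u_i}$. This gives exponential decay in $\log r_i$, i.e.\ \eqref{eq:puncture-weighted-estimate}.

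Given this, integrating the $\theta$-component of the quadrature along small spheres reproduces $\bar\alpha_i$ up to $O(r_i^{\beta_i})$, proving \eqref{eq:alpha-tangent-error}. Evaluating $\bar\alpha_i$ at $\theta_i=0$ and $\theta_i=\pi$ must return $\mathbf b_{i+1}$ and $\mathbf b_i$ respectively. This fixes the constant $\tfrac12(\mathbf b_i+\mathbf b_{i+1})$ and gives the jump \eqref{eq:defect-jump}, since
\[
\ln\frac{(1+b_i)^2}{\sqrt{1-b_i^2}}-\ln\frac{(1-b_i)^2}{\sqrt{1-b_i^2}}=2\ln\frac{1+b_i}{1-b_i}
\]
is split symmetrically about the average of the two rod values.
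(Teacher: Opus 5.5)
Your overall architecture matches how the paper handles this proposition. The paper proves nothing new here: it imports the axis and pole statements from Li--Tian and Weinstein and the puncture and infinity statements from \cite{HKWXAsymptotic}, and reads off the $\alpha$-asymptotics from the quadratures \eqref{eq:intro-alpha-quadrature}. Your pole, horizon-interior and infinity computations for $\alpha$, and your tangent-map outline at the punctures, are consistent with that. However, the self-contained argument you sketch for part (i) does not work.

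\textbf{The lift in part (i).} The lift goes to six transverse dimensions, not eight. With $v-c_j=\rho^4\psi$ one gets $\rho^{-4}\bigl(v_{\rho\rho}-\tfrac3\rho v_\rho\bigr)=\psi_{\rho\rho}+\tfrac5\rho\psi_\rho$, so the lifted equation on $\R^6\times\R_z$ is $\Delta_{\R^7}\psi+4\nabla U\cdot\nabla\psi+16\rho^{-1}U_\rho\,\psi=0$.

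\textbf{The bootstrap in part (i) is circular.} Hypothesis (A1) gives only $U\in L^\infty$ and $\nabla U\in L^2$. This says nothing about the drift $\nabla U$ or the potential $\rho^{-1}U_\rho$ at the axis. Running Moser iteration or barriers on the lift would need something like $U\in C^{1,\gamma}$ across the axis with $U_\rho(0,z)=0$, which is part of what you are trying to prove. Even granting $\psi\in L^\infty$, the source $2e^{4U}\bigl[\rho^2(4\psi+\rho\psi_\rho)^2+\rho^4\psi_z^2\bigr]$ of the $U$-equation is not bounded without control of $\nabla\psi$. The $C^{3,\gamma}$ axis regularity is exactly the nontrivial Li--Tian/Weinstein theorem. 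The paper uses the $\R^6$-lift only afterwards, in Proposition~\ref{prop:axis-factorization}, once the coefficients are bounded and $|v-c|\le C\rho^{3+\gamma}$ puts the lift in $H^1$ across the axis. For (i) you should cite that theorem rather than present the lift as a proof.

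\textbf{The puncture endpoint computation is wrong.} The quantity $1+\cos^2\theta_i+2b_i\cos\theta_i$ equals $2(1+b_i)$ at $\theta_i=0$ and $2(1-b_i)$ at $\theta_i=\pi$, not $(1\pm b_i)^2$. With your values, $\bar\alpha_i(0)-\bar\alpha_i(\pi)=2\ln\frac{1+b_i}{1-b_i}$. This difference must equal $\mathbf b_{i+1}-\mathbf b_i$ directly, with no symmetric splitting involved, so you would obtain twice the jump in \eqref{eq:defect-jump}.

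The correct argument runs as follows. Write $\alpha=C+\ln(1+\cos^2\theta_i+2b_i\cos\theta_i)+O(r_i^{\beta_i})$ uniformly in $\theta_i$, and let $r_i\to0$ along the two adjacent rods. This gives $\mathbf b_{i+1}=C+\ln\bigl(2(1+b_i)\bigr)$ and $\mathbf b_i=C+\ln\bigl(2(1-b_i)\bigr)$. Hence \eqref{eq:defect-jump} holds, and $C=\tfrac12(\mathbf b_i+\mathbf b_{i+1})-\ln\bigl(2\sqrt{1-b_i^2}\bigr)$, which is \eqref{eq:puncture-alpha-profile}.

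\textbf{Minor points at the poles.} The $\alpha$-expansions also need pointwise control of $e^{4u}|\nabla v|^2$ near $q_i^\pm$, not just the $U$-expansions. Hölder continuity of $\widehat U_i^\pm$ alone does not give the gradient bound on $R_i^\pm$ in \eqref{eq:lower-pole-U}. Both facts come from the Li--Tian pole analysis.
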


The puncture and infinity statements are from \cite[Theorems~2.1--2.3 and equations (7.2)--(7.4)]{HKWXAsymptotic}, while the axis and pole statements are based on Li--Tian and Weinstein \cite{LiTian1992,LiTian1993,Weinstein1990,Weinstein1992,Weinstein1995}; see also
the summary in \cite[Section~2, discussion following Theorem~2.3]{HKWXAsymptotic}. The corresponding statements concerning the Weyl conformal factor $\alpha$ arise directly from the harmonic map asymptotics and equations \eqref{eq:intro-alpha-quadrature}. 

\section{Regular Axis Factorization}\label{sec:axis}

In this section, the factorization \eqref{eq:intro-factorization} will be established. Results in this section will be needed in Sections \ref{sec:nonlinear} and \ref{sec:strict}. 
We first present some elementary facts about radial functions. 


\begin{lemma}\label{lem:radial-profiles}
Let $I\subset\R$ be an interval and let $m,\ell\geq2$.
\begin{enumerate}[label=\textup{(\roman*)}]
\item Suppose that $q=q(\rho,z)$ is defined for $\rho\geq0$, and that
\[
 h_m(x,z)=q(|x|,z),\quad (x,z)\in\R^m\times I,
\]
is $C^{k,\gamma}$ and $\SO(m)$-invariant, where $k\geq0$ and $0<\gamma<1$.  Then the same radial representative defines an $\SO(\ell)$-invariant function
\[
 h_\ell(X,z)=q(|X|,z)
\]
of class $C^{k,\gamma}$ on $\R^\ell\times I$.

\item If $h_m$ is smooth, then there is a function $H=H(s,z)$, which is smooth for $s\geq0$ up to $s=0$, such that
\begin{equation}\label{eq:smooth-radial-factorization}
 h_m(x,z)=H(|x|^2,z).
\end{equation}

\item If $q(|x|,z)$ is $C^{k+2,\gamma}$ on $\mathbb{R}^m \times I$, then $q_\rho /\rho$, initially defined for $\rho>0$, extends as a radial $C^{k,\gamma}$ function across $x=0$.  
\end{enumerate}
\end{lemma}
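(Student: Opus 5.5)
For part (i), I will compare the two functions along lines, using the fact that the restriction of $h_m$ to a two-dimensional coordinate plane through the axis already carries all the information. Fix a unit vector $e\in\R^m$. Then $q(\rho,z)=h_m(\rho e,z)$ for $\rho\ge 0$, and by $\SO(m)$-invariance $q(-\rho,z):=h_m(-\rho e,z)=q(\rho,z)$. So $q$ extends evenly to a function $\tilde q(t,z)=h_m(te,z)$ on $\R\times I$, which is $C^{k,\gamma}$ as a restriction of $h_m$ to a line $\times I$.

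When $\ell\le m$, the map $(X,z)\mapsto(\iota X,z)$ with $\iota:\R^\ell\to\R^m$ an isometric embedding gives $h_\ell=h_m\circ(\iota,\mathrm{id})$. This is directly $C^{k,\gamma}$.

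When $\ell>m$, the above composition is not available, so I will first reduce to smoothness. I would argue via the classical Whitney-type result (Whitney's theorem on even functions, in its H\"older version): an even $C^{k,\gamma}$ function of $t$ is not simply a composite of $t^2$ at that regularity. So instead I plan to use a direct derivative computation. Derivatives of $X\mapsto \tilde q(|X|,z)$ up to order $k$ are linear combinations of terms
\[
\partial_t^a\partial_z^b\tilde q\cdot(\text{homogeneous rational functions of }X \text{ of degree } a-j)\cdot t^{-j}.
\]
These are exactly the same universal expressions as in dimension $m$, with coefficients polynomial in the dimension. Boundedness and H\"older continuity of these combinations in dimension $m$ are equivalent to conditions on the one-variable quantities $\partial_t^a\partial_z^b\tilde q$ and on the iterated quotients $(t^{-1}\partial_t)^j\partial_t^{a}\tilde q$. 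These conditions are independent of the ambient dimension.

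Concretely, for an even function the operator $D=t^{-1}\partial_t$ satisfies $\Delta_{\R^m}=\partial_t^2+(m-1)D$. Moreover, the $\SO(m)$-invariant Hessian of $q(|x|)$ has eigenvalues $\partial_t^2 q$ (radial) and $Dq$ (tangential, when $m\ge2$). Since $m\ge2$, every tangential eigenvalue structure is already present in dimension $m$. This dimension-independence is the main obstacle, and I expect it to be the main technical check. I would handle it by induction on $k$, showing that $h_m\in C^{k,\gamma}$ iff $\tilde q\in C^{k,\gamma}$ and $D\tilde q$ corresponds to a radial $C^{k-2,\gamma}$ function in some, equivalently any, dimension $\ge2$.

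For part (iii), I will use the fact that radial functions of $(X,z)$ on $\R^{m+2}$ relate to $\R^m$ by $\Delta_{m+2}q=\Delta_m q+2q_\rho/\rho$. A cleaner route is the formula
\[
\frac{q_\rho}{\rho}(\rho,z)=\int_0^1\partial_t^2\tilde q(s\rho,z)\,ds,
\]
valid since $\tilde q$ is even, so $\partial_t\tilde q(0,z)=0$. By part (i) applied with $\ell=1$ in spirit, $\partial_t^2\tilde q$ is even and $C^{k,\gamma}$ in $(t,z)$. The averaged function $\int_0^1\partial_t^2\tilde q(s|x|,z)\,ds$ is then $C^{k,\gamma}$ in $(x,z)$: for $x\mapsto g(s|x|,z)$ with $g$ even, regularity in $x$ follows from the induction of part (i) applied to $g$. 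The integral preserves the H\"older bounds uniformly in $s$.

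For part (ii), I will apply Whitney's theorem: a smooth even function $\tilde q(t,z)$ equals $H(t^2,z)$ with $H$ smooth on $s\ge0$, smoothly in the parameter $z$. Alternatively, I can iterate part (iii): $D^j\tilde q$ is smooth and even for every $j$, so the Taylor coefficients in $s=t^2$ exist. Borel's lemma together with a flat remainder argument then yields $H$.
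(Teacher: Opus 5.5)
Your case $\ell\le m$ (composing $h_m$ with an isometric embedding $\R^\ell\hookrightarrow\R^m$) and your part (ii) (Whitney's even-function theorem with parameters) are correct. The gap is the case $\ell>m$ of part (i). This is exactly the case the paper needs, since it lifts $U$ from two to six transverse variables.

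You call dimension independence ``the main obstacle'' but never establish it. The chain-rule terms $\partial_t^j\tilde q(|X|,z)\,|X|^{j-|\beta|}P_{\beta,j}(X/|X|)$ are individually singular or discontinuous at $X=0$. Their sum is regular in $\R^m$ only through cancellation, so knowing that the sum is $C^{k,\gamma}$ in $\R^m$ gives nothing formally in $\R^\ell$.

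The induction you propose also does not close as stated. Knowing that $D\tilde q$ is a radial $C^{k-2,\gamma}$ function on $\R^\ell\times I$ only gives $\partial_{X_i}h_\ell=X_i\,D\tilde q(|X|,z)\in C^{k-2,\gamma}$. That is one derivative short of $h_\ell\in C^{k,\gamma}$. Passing through the Hessian $D\tilde q\,\delta_{ij}+\hat X_i\hat X_j(\partial_t^2\tilde q-D\tilde q)$ instead requires regularity of a non-radial function, which a radial induction hypothesis does not control.

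The missing ingredient, which the paper uses, is to subtract the transverse Taylor expansion first. After $n\le k$ derivatives in $z$, invariance (equivalently, evenness) forces the Taylor polynomial in $\rho$ to be $\sum_{2j\le k-n}a_{j,n}(z)\rho^{2j}$. Since $\rho^{2j}=|X|^{2j}$ is a polynomial in every dimension, this part is harmless. The remainder obeys $|\partial_\rho^jR_{k-n,n}|\le C|\rho|^{k-n+\gamma-j}$, which absorbs the factor $|X|^{j-|\beta|}$ in each chain-rule term separately. The angular factors are bounded in any dimension, so the estimate is dimension-free.

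Your induction could alternatively be completed by elliptic regularity, which you hint at but do not use. On radial functions, $\Delta_{\R^\ell\times\R}h_\ell$ equals the radial profile of $\Delta_{\R^m\times\R}h_m$ plus $(\ell-m)D\tilde q$, both evaluated at $(|X|,z)$. Both are radial $C^{k-2,\gamma}$ functions in $\R^m$; for $D\tilde q$ this follows from the identity below. The induction hypothesis lifts both to $\R^\ell$. Since $h_\ell\in C^1$ solves the equation weakly across the axis, which has codimension $\ell\ge3$, Schauder theory gives $h_\ell\in C^{k,\gamma}$, with base cases $k=0,1$ checked directly.

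Part (iii) inherits the gap, and it has an extra problem. It applies (i) to $g=\partial_t^2\tilde q$, which is only known to be even and $C^{k,\gamma}$ in $(t,z)$. But (i) and your induction assume $C^{k,\gamma}$ regularity as a radial function in some dimension $\ge2$. A self-contained fix works in dimension $m$: $\partial_{x_1}h_m$ is odd in $x_1$, so $q_\rho(|x|,z)/|x|=\int_0^1\partial_{x_1}^2h_m(sx_1,x',z)\,ds$, which is visibly $C^{k,\gamma}$ on $\R^m\times I$.
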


\begin{proof}
The proof is straightforward and included for completeness.
We may work locally on a compact subinterval $I'\subset I$. Let $e_1 \in \mathbb{R}^m$ be a unit vector. Restriction to the line defined by $e_1$ gives an even extension
\[
 Q(\rho,z):=h_m(\rho e_1,z),\quad (\rho,z)\in(-\rho_0,\rho_0)\times I',
\]
of class $C^{k,\gamma}$.  At a point of the $z$-axis, every Taylor polynomial of an $\SO(m)$-invariant function is an invariant polynomial,  and hence is a linear combination of the powers $|x|^{2j}$.  Applying Taylor's theorem in the transverse variables $x\in \mathbb{R}^m$, also after $n$ differentiations in $z$ with $n\leq k$, gives
\begin{equation}\label{eq:radial-finite-taylor}
 \partial_z^n Q(\rho,z)
 =\sum_{2j\leq k-n}a_{j,n}(z)\rho^{2j}
 +R_{k-n,n}(\rho,z),
\end{equation}
where, for $0\leq \ell\leq k-n$,
\begin{equation}\label{eq:radial-remainder-estimate}
 \bigl|\partial_\rho^\ell R_{k-n,n}(\rho,z)\bigr|
 \leq C|\rho|^{k-n+\gamma-\ell},
\end{equation}
and the corresponding H\"older estimates hold uniformly for $z\in I'$.  In particular, all odd radial derivates permitted by the regularity vanish at $\rho=0$.

For $X\neq0$, repeated differentiation of a radial representative has the form
\begin{equation}\label{eq:radial-cartesian-derivatives}
 \partial_X^\beta\bigl[Q(|X|,z)\bigr]
 =\sum_{\ell=1}^{|\beta|}
 \partial_\rho^\ell Q(|X|,z)|X|^{\ell-|\beta|}
 P_{\beta,\ell}\!\left(\frac{X}{|X|}\right),
\end{equation}
where the $P_{\beta,\ell}$ are smooth functions on the unit sphere depending only on $\beta$ and $\ell$.
By substituting \eqref{eq:radial-finite-taylor} into \eqref{eq:radial-cartesian-derivatives}, we find that the polynomial part 
is a polynomial in $X$, while \eqref{eq:radial-remainder-estimate} controls every apparent negative power of $|X|$ and gives the required $C^{k-n,\gamma}$ extension at $X=0$.  Applying this for each $n\leq k$ proves that $Q(|X|,z)$ is $C^{k,\gamma}$, jointly in $(X,z)$ .  This does not depend on the transverse dimension, and thus proves (i).

For (ii), observe that the evenness of $Q$ and repeated Taylor expansion give, for every $L$,
\[
 Q(\rho,z)=\sum_{j=0}^{L}a_j(z)\rho^{2j}+\rho^{2L+2}Q_{L+1}(\rho,z),
\]
where $Q_{L+1}$ is smooth and even.  Hence, $H(s,z):=Q(\sqrt{s},z)$ for $s\geq0$ has derivatives of every order extending continuously to $s=0$.  Rotational invariance of $h_m$ then yields \eqref{eq:smooth-radial-factorization}.

Lastly, evenness implies $q_\rho(0,z)=0$, and therefore
\begin{equation}\label{eq:radial-quotient-formula}
 \frac{q_\rho(\rho,z)}{\rho}=\int_0^1 q_{\rho\rho}(t \rho,z)dt.
\end{equation}
Moreover, since $Q(\rho,z)=q(|\rho|,z)$ is even and $C^{k+2,\gamma}$, its second radial
derivative $Q_{\rho\rho}$ is even and $C^{k,\gamma}$.  The finite regularity
argument from part (i), applied to $Q_{\rho\rho}$, shows that the radial
function $(x,z)\mapsto q_{\rho\rho}(|x|,z)$
is $C^{k,\gamma}$. Therefore, $|x|^{-1}q_\rho(|x|,z)$
extends as a $C^{k,\gamma}$ function across
$x=0$.
This proves (iii).
\end{proof}

The next result gives the desired smooth factorization of the harmonic map functions at the interior of axis rods.

\begin{proposition}\label{prop:axis-factorization}
Let $I$ be a compact subinterval of an axis rod $\Gamma_j$, and write $v=c$ on that rod. There exist functions $F$ and $G$, smooth on $\mathbb{R}^2_+$ up to the boundary, such that on a sufficiently small neighborhood of $I$ in $\mathbb{R}^3$,
\begin{equation}\label{eq:axis-factorization}
 U(\rho,z)=F(\rho^2,z),
 \quad
 v(\rho,z)=c+\rho^4 G(\rho^2,z).
\end{equation}
\end{proposition}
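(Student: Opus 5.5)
We start from Proposition~\ref{prop:inputs}\ref{concl:open-axis}: on a neighborhood of $I$ the pair $(U,v)$ is an axisymmetric $C^{3,\gamma}$ pair on $\R^3$. The plan is to bootstrap to smoothness by rewriting the system \eqref{eq:HM-U}--\eqref{eq:HM-v-U} as an elliptic system in suitable dimensions, and then to apply Lemma~\ref{lem:radial-profiles}(ii).

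\emph{Step 1: the potential $U$.} Since $e^{4u}=\rho^{-4}e^{4U}$, the $U$-equation reads
\[
\Delta U=2e^{4U}\rho^{-4}|\grad v|^2 .
\]
Set $w=(v-c)/\rho^4$. Then $\rho^{-2}\grad v$ must be expressed through $w$:
\[
v_\rho=4\rho^3 w+\rho^4 w_\rho,\qquad v_z=\rho^4 w_z ,
\]
so that $\rho^{-4}|\grad v|^2=\rho^2(4w+\rho w_\rho)^2+\rho^4 w_z^2$. This is a smooth function of $w$ and its derivatives once $w$ is regular. So the first task is to establish the regularity of $w$.

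\emph{Step 2: the lifted equation for $w$.} Substituting $v=c+\rho^4 w$ into \eqref{eq:HM-v-U}, the term $-4\rho^{-1}v_\rho$ combines with the Laplacian. For $w$ radial we have
\[
\Delta_{\R^3}(\rho^4 w)=\rho^4\Bigl(w_{\rho\rho}+\tfrac{9}{\rho}w_\rho+w_{zz}\Bigr)+16\rho^2 w ,
\]
while
\[
+\tfrac{4}{\rho}v_\rho=16\rho^2 w+4\rho^3 w_\rho .
\]
The equation $\Delta v-\tfrac{4}{\rho}v_\rho+4\grad U\cdot\grad v=0$ therefore becomes
\[
w_{\rho\rho}+\tfrac{5}{\rho}w_\rho+w_{zz}
+4\Bigl(U_\rho\bigl(\tfrac{4}{\rho}w+w_\rho\bigr)+U_z w_z\Bigr)=0 .
\]
This is the Laplacian on $\R^6\times\R$ (transverse dimension $6$) applied to $w(|X|,z)$, plus lower-order terms. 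The coefficient $U_\rho/\rho$ is $C^{1,\gamma}$ as a radial function by Lemma~\ref{lem:radial-profiles}(iii). Lemma~\ref{lem:radial-profiles}(i) transplants $U$ and its regularity to $\R^6$. We get a linear elliptic equation
\[
\Delta_{\R^7}W+b\cdot\grad W+cW=0,\qquad c=16\,U_\rho/\rho ,
\]
with Hölder coefficients.

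\emph{Step 3: initial regularity of $w$ (the main obstacle).} We must show that $W$ is a genuine weak solution across the axis $X=0$, and that $w$ is bounded. From $v\in C^{3,\gamma}$, radial evenness and $v=c$ on the axis, the Taylor structure in Lemma~\ref{lem:radial-profiles} gives
\[
v-c=a(z)\rho^2+O(\rho^{3+\gamma}) .
\]
The $\rho^2$ coefficient must vanish: inserting the expansion into \eqref{eq:HM-v-U}, the singular term $4\rho^{-1}v_\rho\sim 8a$ appears while $\Delta v\sim 4a$, which forces $a\equiv0$. This gives $|v-c|\le C\rho^{3+\gamma}$, which is not yet $\rho^4$. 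So we first consider $\tilde w=(v-c)/\rho^2$, which satisfies an analogous equation with $\tfrac{1}{\rho}$-coefficient $1$ (transverse dimension $2$, i.e. the Laplacian on $\R^4$) and is bounded with finite energy. We plan to use a capacity/cutoff argument (codimension $\ge2$ sets are removable for bounded $H^1$ solutions) to get $\tilde w\in C^{1,\gamma}$, hence $\tilde w=O(\rho^2)$ by evenness. Then we repeat the same argument with $w$ in $\R^7$ (codimension $6$ axis, bounded $w$), using weighted energy bounds from \eqref{eq:axis-input-hypotheses}. Alternatively, we can iterate $\rho^{-1}$-division via Lemma~\ref{lem:radial-profiles}(iii) at each gain.

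\emph{Step 4: bootstrap.} Once $W\in C^{1,\gamma}$ is known, the $U$-equation on $\R^3$ has right-hand side $2e^{4U}\bigl(\rho^2(4w+\rho w_\rho)^2+\rho^4w_z^2\bigr)$. This expression is radial and of class $C^{\,\cdot,\gamma}$ with one less derivative than $w$, and Schauder gives $U$ two more derivatives. Then $U_\rho/\rho$ gains regularity by Lemma~\ref{lem:radial-profiles}(iii), and the $W$-equation gives $W$ two more derivatives. Alternating the two equations yields $U,W\in C^\infty$. Finally, Lemma~\ref{lem:radial-profiles}(ii) produces $F$ and $G$ with $U=F(\rho^2,z)$ and $w=G(\rho^2,z)$, which is \eqref{eq:axis-factorization}.
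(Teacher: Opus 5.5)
There is a genuine gap in your Step~3, the step you yourself flag as the crux. Your overall architecture is the paper's. You kill the $\rho^2$ coefficient of $v-c$ (your $4a-8a=0$ is the paper's $2a-6a=0$). You lift $w=(v-c)/\rho^4$ to $\R^6\times\R$, where it solves $\Delta_{\R^7}W+4\grad\widehat U\cdot\grad W+16(U_\rho/\rho)W=0$. You push this equation across $X=0$, then alternate Schauder estimates with the $U$-equation, and finally apply Lemma~\ref{lem:radial-profiles}(ii).

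Your intermediate function $\tilde w=(v-c)/\rho^2$ does not satisfy an ``analogous'' equation. Writing $v-c=\rho^k f$, the principal part $\partial_\rho^2-\frac{3}{\rho}\partial_\rho$ of the $v$-equation becomes $\rho^k\bigl(f_{\rho\rho}+\frac{2k-3}{\rho}f_\rho+\frac{k(k-4)}{\rho^2}f\bigr)$. For $k=2$ this gives $\tilde w_{\rho\rho}+\frac1\rho\tilde w_\rho+\tilde w_{zz}-\frac{4}{\rho^2}\tilde w+\cdots=0$. The inverse-square potential is not a bounded lower-order term, so the removability-plus-regularity argument you invoke does not apply. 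The exponent $k=4$ is the only positive one that removes this term, which is why the paper goes directly to $\rho^{-4}$ and six transverse dimensions. (Also, transverse dimension $2$ means $\R^3$, not $\R^4$.)

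Two further steps fail. Even granting $\tilde w\in C^{1,\gamma}$, evenness and $\tilde w(0,z)=0$ yield only $\tilde w=O(\rho^{1+\gamma})$, which you already had; $O(\rho^2)$ would require $C^{1,1}$. Your alternative of iterating Lemma~\ref{lem:radial-profiles}(iii) fails as well: each application costs two derivatives, and $C^{3,\gamma}$ affords only one. So your route to a bounded $w$ does not close.

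The missing observation is that boundedness of $w$ is never needed. The $C^{3,\gamma}$ Taylor expansion with $a\equiv0$ gives both $|v-c|\le C\rho^{3+\gamma}$ and $|\grad v|\le C\rho^{2+\gamma}$. Hence $|w|\le C\rho^{-1+\gamma}$ and $|\grad w|\le C\rho^{-2+\gamma}$. Against the seven-dimensional measure $\rho^5\,d\rho\,dz$ these are square integrable, e.g.\ $\int_0^\eps|\grad w|^2\rho^5\,d\rho=O(\eps^{2+2\gamma})$, so $W\in H^1_{\mathrm{loc}}(\R^7)$ outright.

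Now test with $(1-\chi_\eps)\varphi$. Because the axis has codimension six, $\int_{\R^6}|\grad\chi_\eps|^2\,dX=O(\eps^4)$, so the cutoff error is $O(\eps^{1+\gamma})\cdot O(\eps^2)\to0$ and the equation holds weakly across $X=0$. Since $\grad\widehat U$ and $U_\rho/\rho$ are bounded, $W^{2,p}$ estimates and Sobolev embedding raise $W$ to $W^{2,14}_{\mathrm{loc}}\subset C^{1,1/2}$. Schauder estimates then give $C^{3,\gamma_0}$, and your Step~4 proceeds as in the paper. Boundedness of $w$ is a consequence of this regularity, not a prerequisite for it.
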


\begin{proof} 
Since $u=U-\ln\rho$, equation \eqref{eq:HM-v-U} becomes
\begin{equation}\label{eq:y-equation}
 v_{\rho\rho}-\frac3\rho v_\rho+v_{zz}
 +4U_\rho v_\rho+4U_zv_z=0.
\end{equation}
By Proposition~\ref{prop:inputs}(i), $v$ is the restriction of an $\SO(2)$-invariant Cartesian $C^{3,\gamma}$ function in $(x_1,x_2,z)$, and $v(0,z)=c$.  
Uniformly on compact subintervals of the rod,
\begin{align}
 v(\rho,z)&=c+a(z)\rho^2+O(\rho^{3+\gamma}),\notag\\
 v_\rho(\rho,z)&=2a(z)\rho+O(\rho^{2+\gamma}),\label{eq:y-taylor}\\
 v_{\rho\rho}(\rho,z)&=2a(z)+O(\rho^{1+\gamma}),\notag
\end{align}
with the analogous estimates involving the $z$-differentiations allowed by $C^{3,\gamma}$ regularity.  Since $v(0,z)$ is identically constant, $v_{zz}(0,z)=0$.  Moreover,
\[
 U_\rho=O(\rho),\quad U_z=O(1),\quad v_z=O(\rho^2).
\]
Letting $\rho\rightarrow0$ in \eqref{eq:y-equation} and using \eqref{eq:y-taylor} produces
\[
 2a(z)-6a(z)=0.
\]
Thus $a(z)\equiv0$ on the rod interval, and we have
\begin{equation}\label{eq:y-improved}
 |v-c|\leq C\rho^{3+\gamma},
 \quad |v_\rho|+|v_z|\leq C\rho^{2+\gamma}.
\end{equation}

For $\rho>0$, define
\begin{equation}\label{eq:def-psi}
 h(\rho,z)=\rho^{-4}\big[v(\rho,z)-c\big].
\end{equation}
A direct substitution into \eqref{eq:y-equation} gives
\begin{equation}\label{eq:psi-six-radial}
 h_{\rho\rho}+\frac5\rho h_\rho+h_{zz}
 +4U_\rho h_\rho+4U_z h_z
 +16\frac{U_\rho}{\rho} h=0.
\end{equation}
The estimates \eqref{eq:y-improved} imply
\begin{equation}\label{eq:psi-initial-growth}
 |h|\leq C\rho^{-1+\gamma},
 \quad |\grad h|\leq C\rho^{-2+\gamma}.
\end{equation}
Now lift $U$ and $h$ to six transverse variables $X\in\R^6$ by
\[
\widehat U(X,z)=U(|X|,z),
\quad
 \widehat G(X,z)=h(|X|,z).
\]
Away from $X=0$, $\widehat G$ satisfies 
\begin{equation}\label{eq:psi-lifted}
 \Delta_{\R^7}\widehat G
 +4\grad\widehat U\cdot\grad\widehat G
 +16\widehat{\left(\frac{U_\rho}{\rho}\right)}\widehat G=0.
\end{equation}
We next improve the regularity of $\widehat G$ and prove that \eqref{eq:psi-lifted} holds across $X=0$. 

By Lemma~\ref{lem:radial-profiles}(i), $\widehat U$ is $C^{3,\gamma}$, and Lemma~\ref{lem:radial-profiles}(iii) shows that $U_\rho/\rho$ extends at least as a $C^{1,\gamma}$ function across $X=0$. Moreover from \eqref{eq:psi-initial-growth}, in a bounded $z$-cylinder, we have
\begin{align}\label{eq:psi-L2}
 \int_0^\eps |h|^2\rho^5 d\rho=O(\eps^{4+2\gamma}),\qquad
 \int_0^\eps |\grad h|^2\rho^5 d\rho=O(\eps^{2+2\gamma}).
\end{align}
Thus $\widehat G\in H^1_{\mathrm{loc}}(\R^7)$.
Choose a radial cut-off $\chi_\eps(X)$ equal to one for $|X|<\eps$, zero for $|X|>2\eps$, and satisfying $|\grad\chi_\eps|\leq C\eps^{-1}$.  Then
\begin{equation}\label{eq:capacity-cutoff}
 \int_{\R^6}|\grad\chi_\eps|^2 dX
 \leq C\eps^{-2}\int_\eps^{2\eps}\rho^5 d\rho
 =O(\eps^4).
\end{equation}
Use $(1-\chi_\eps)\varphi$ as a test function for \eqref{eq:psi-lifted}, where $\varphi\in C_c^\infty(\R^7)$.  The cut-off term involving the principal part is bounded by
\begin{align*}
 \left(\int_{\{\eps<|X|<2\eps\}}|\grad\widehat G|^2\right)^{1/2}
 \left(\int_{\{\eps<|X|<2\eps\}}|\grad\chi_\eps|^2\right)^{1/2}\|\varphi\|_{L^\infty}
 =O(\eps^{1+\gamma})\cdot O(\eps^2)=o(1),
\end{align*}
by \eqref{eq:psi-L2} and \eqref{eq:capacity-cutoff}.  Since the square root of the tube volume is $O(\eps^3)$, the remaining omitted error terms satisfy
\[
 O(\eps^{1+\gamma})\cdot O(\eps^3),\quad
 O(\eps^{4+\gamma}),\quad
 O(\eps^{5+\gamma}),
\]
for the principal term containing $\grad\varphi$, the first-order term, and the zeroth-order term, respectively.  Here we used \eqref{eq:psi-L2}, and boundedness of the coefficients. It follows that \eqref{eq:psi-lifted} holds weakly across $\{X=0\}$.

We may now bootstrap.  The lower-order coefficients in \eqref{eq:psi-lifted} are bounded, and $\widehat G\in H^1$, so $ \Delta_{\R^7}\widehat G\in L^2$.  Interior $W^{2,p}$ estimates and Sobolev embedding successively raise the integrability of $\grad\widehat G$ according to $2\rightarrow\frac{14}{5}\rightarrow\frac{14}{3}\rightarrow 14$, since in 
dimension seven the critical Sobolev exponent is $7p/(7-p)$.  After the last step, $\widehat G\in W^{2,14}_{\mathrm{loc}}$, and Morrey's inequality yields $\widehat G\in C^{1,1/2}_{\mathrm{loc}}$.  If $0<\gamma_0<\min\{\gamma,\tfrac12\}$, then
Schauder estimates applied to \eqref{eq:psi-lifted} produce $\widehat G\in C^{3,\gamma_0}$.

Next observe that the $U$ equation \eqref{eq:HM-U}, expressed in terms of $h$, is given by
\begin{equation}\label{eq:U-psi-equation}
 U_{\rho\rho}+\frac1\rho U_\rho+U_{zz}
 =2e^{4U}\left[\rho^2(4h+\rho h_\rho)^2
 +\rho^4 h_z^2\right].
\end{equation}
Lemma \ref{lem:radial-profiles} allows us to move the improved radial representative of $\widehat G$ back to the original two transverse variables.  
Since the right-hand side of \eqref{eq:U-psi-equation} is then a $C^{2,\gamma_0}$ function in $\R^3$ across the original axis, the three-dimensional Schauder estimates show that $U\in C^{4,\gamma_0}$.

We now proceed inductively.  Starting from $m=0$, suppose that
\[
 U\in C^{m+3,\gamma_0},
 \quad \widehat G\in C^{m+2,\gamma_0}.
\]
By Lemma~\ref{lem:radial-profiles}(iii), it holds that $U_\rho/\rho\in C^{m+1,\gamma_0}$.  Equation \eqref{eq:psi-lifted} then yields $\widehat G\in C^{m+3,\gamma_0}$, and thus \eqref{eq:U-psi-equation} gives $U\in C^{m+4,\gamma_0}$. Continuing this iteration shows that both $U$ and the lifted $\widehat G$ are smooth across the axis.
We may now apply Lemma~\ref{lem:radial-profiles}(ii), first in two transverse dimensions for $U$ and then in six transverse dimensions for $\widehat G$, to find smooth functions $F$ and $G$ with
\[
 U(\rho,z)=F(\rho^2,z),
 \quad h(\rho,z)=G(\rho^2,z).
\]
Although this holds only in a neighborhood of the interval $I$, the functions $F$ and $G$ may be extended smoothly to all of $\mathbb{R}^2_+$. 
Lastly, since $v-c=\rho^4h$, the desired result follows.
\end{proof}

\section{Weyl Conformal Factor Global Upper Bound}\label{sec:nonlinear}

The goal of this section is to obtain the global upper bound \eqref{eq:intro-global-bound} for $\alpha$. We begin by deriving the nonlinear
differential inequality \eqref{eq:intro-key}. Although the equation for $\Delta_{\rho,z}\alpha$ obtained by differentiating the quadratures \eqref{eq:intro-alpha-quadrature} has an indefinite right-hand side, the negative term is controlled exactly by $|\grad\alpha|/\rho$.  Moreover, our proof relates this inequality to the scalar curvature of constant time slices in the stationary vacuum spacetime.

\subsection{The Weyl conformal factor identity/inequality}

\begin{lemma}
\label{lem:alpha-laplacian-geometric}
Away from $\Gamma$, the Weyl conformal factor $\alpha$ satisfies the identity
\begin{equation}\label{eq:alpha-laplacian}
 \Delta_{\rho,z}\alpha
 =-|\nabla U|^2+3e^{4u}|\nabla v|^2,
\end{equation}
and consequently the differential inequality
\begin{equation}\label{eq:key-inequality}
 \Delta_{\rho,z}\alpha+\frac{|\grad\alpha|}{\rho}
 \geq 2e^{4u}|\grad v|^2\geq0.
\end{equation}
\end{lemma}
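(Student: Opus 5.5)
The plan is to verify the identity \eqref{eq:alpha-laplacian} by differentiating the quadratures \eqref{eq:intro-alpha-quadrature} directly and then using the harmonic map system, and afterwards to obtain \eqref{eq:key-inequality} from a complex-square representation of $\grad\alpha$. Everything takes place away from $\Gamma$, where $(U,v)$ is smooth, so differentiation is justified. Write $E=e^{4u}$.

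\emph{The $U$-contribution.} Computing $\Delta_{\rho,z}\alpha=\partial_\rho\alpha_\rho+\partial_z\alpha_z$, the $U$-terms give
\[
\partial_\rho\bigl(\rho(U_\rho^2-U_z^2)\bigr)+\partial_z\bigl(2\rho U_\rho U_z\bigr)
=U_\rho^2-U_z^2+2\rho U_\rho(U_{\rho\rho}+U_{zz}).
\]
Here the mixed terms $\mp 2\rho U_zU_{\rho z}$ cancel. Substituting $U_{\rho\rho}+U_{zz}=\Delta U-\rho^{-1}U_\rho$ and using \eqref{eq:HM-U} turns this into
\[
-|\grad U|^2+4\rho U_\rho E|\grad v|^2 .
\]

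\emph{The $v$-contribution.} Treat the $v$-terms the same way:
\[
\partial_\rho\bigl(\rho E(v_\rho^2-v_z^2)\bigr)+\partial_z\bigl(2\rho Ev_\rho v_z\bigr)
=E(v_\rho^2-v_z^2)+4\rho E\bigl[u_\rho(v_\rho^2-v_z^2)+2u_zv_\rho v_z\bigr]+2\rho Ev_\rho(v_{\rho\rho}+v_{zz}).
\]
The middle term comes from differentiating $E$, and the mixed second derivatives of $v$ cancel. Now insert the $v$-equation in the form $v_{\rho\rho}+v_{zz}=-\rho^{-1}v_\rho-4(u_\rho v_\rho+u_zv_z)$. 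The $u_z$ terms cancel and the total collapses to
\[
-E|\grad v|^2-4\rho u_\rho E|\grad v|^2 .
\]
Finally write $u_\rho=U_\rho-\rho^{-1}$. This produces $+3E|\grad v|^2-4\rho U_\rho E|\grad v|^2$, whose second term cancels exactly the cross term $4\rho U_\rho E|\grad v|^2$ from the $U$-part, yielding \eqref{eq:alpha-laplacian}.

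\emph{The inequality.} Identify the gradient $\grad\alpha$ with the complex number $\alpha_\rho+i\alpha_z$. The quadratures then read
\[
\frac{\alpha_\rho+i\alpha_z}{\rho}=(U_\rho+iU_z)^2+E\,(v_\rho+iv_z)^2 .
\]
The two summands have moduli $|\grad U|^2$ and $E|\grad v|^2$, so the reverse triangle inequality gives
\[
\frac{|\grad\alpha|}{\rho}\geq|\grad U|^2-E|\grad v|^2 .
\]
Adding this to \eqref{eq:alpha-laplacian} gives $\Delta_{\rho,z}\alpha+\rho^{-1}|\grad\alpha|\geq 2E|\grad v|^2\geq0$, which is \eqref{eq:key-inequality}.

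The only real obstacle is the bookkeeping in the twist part of the identity, namely tracking the $u_\rho=U_\rho-\rho^{-1}$ shift so that the $\rho U_\rho E|\grad v|^2$ cross terms cancel. The geometric interpretation via the scalar curvature of the time slices is not needed for the proof.
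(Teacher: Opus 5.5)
Your proposal is correct. The computation checks out line by line: the $U$-part gives $-|\grad U|^2+4\rho U_\rho e^{4u}|\grad v|^2$, the twist part gives $3e^{4u}|\grad v|^2-4\rho U_\rho e^{4u}|\grad v|^2$, and the cross terms cancel.

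For the identity \eqref{eq:alpha-laplacian} you take a genuinely different route from the paper. The paper never differentiates the quadratures. Instead it computes the scalar curvature of the time slice $h=e^{-2U}\bigl(e^{2\alpha}(d\rho^2+dz^2)+\rho^2d\phi^2\bigr)$ in two ways. The conformal-change formula gives $e^{2U-2\alpha}\bigl(-2\Delta_{\rho,z}\alpha+4\Delta U-2|\grad U|^2\bigr)$. The vacuum Hamiltonian constraint, quoted from Khuri--Weinstein, gives $R_h=2e^{2U-2\alpha}e^{4u}|\grad v|^2$. Combining these with $\Delta U=2e^{4u}|\grad v|^2$ yields the identity.

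Your direct differentiation is fully self-contained. It uses only \eqref{eq:intro-alpha-quadrature} and the harmonic map system, so it does not rely on the externally cited constraint formula; in effect it verifies that formula. It also makes clear that the identity follows from the reduced equations alone. The paper's route is shorter given the cited formulas, and it explains where the favorable term $3e^{4u}|\grad v|^2$ comes from: the Hamiltonian constraint for the rotating time slices. This is the ``scalar curvature related'' structure the authors emphasize, and it is the kind of insight that suggests how to find such identities in other settings.

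For the inequality your argument is essentially the paper's. Your complex form $(\alpha_\rho+i\alpha_z)/\rho=(U_\rho+iU_z)^2+e^{4u}(v_\rho+iv_z)^2$ is the same as the paper's elementary identity $\bigl[(X_1^2-X_2^2)+(Y_1^2-Y_2^2)\bigr]^2+4(X_1X_2+Y_1Y_2)^2=(|X|^2-|Y|^2)^2+4(X\cdot Y)^2$, whose left side is $|X_c^2+Y_c^2|^2$. The paper records the two-sided bound $\rho^{-1}|\grad\alpha|\geq\bigl||\grad U|^2-e^{4u}|\grad v|^2\bigr|$. You extract only the one-sided bound from the reverse triangle inequality, which is all that is needed.
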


\begin{proof}
Consider a time slice $\{t=\mathrm{constant}\}$, of the stationary vacuum spacetime with metric \eqref{eq:intro-WP-metric}.  The metric induced on
$\Sigma_t$ is given by
\[
 h=e^{-2U}\bar q,
 \quad
 \bar q=e^{2\alpha}(d\rho^2+dz^2)+\rho^2d\phi^2.
\]
A standard Riemannian geometry formula yields the scalar curvature of this slice
\begin{equation}\label{eq:scalar-conformal-alpha}
 R_h
 =e^{2U}\left(
 \bar R+4\bar\Delta U-2|\bar\nabla U|_{\bar q}^2
 \right).
\end{equation}
Moreover, for the metric $\bar q$, a direct calculation yields
\[
 \bar R=-2e^{-2\alpha}\Delta_{\rho,z}\alpha,
 \quad
 \bar\Delta U=e^{-2\alpha}\Delta U,
 \quad
 |\bar\nabla U|_{\bar q}^2=e^{-2\alpha}|\nabla U|^2,
\]
and hence
\begin{equation}\label{eq:scalar-h-geometric}
 R_h
 =e^{2U-2\alpha}
 \left(
 -2\Delta_{\rho,z}\alpha
 +4\Delta U
 -2|\nabla U|^2
 \right).
\end{equation}
On the other hand, we may compute the scalar curvature from the vacuum Hamiltonian constraint \cite[(2.17)]{KhuriWeinstein2016} to find
\begin{equation}\label{eq:scalar-h-constraint}
 R_h
 =2e^{2U-2\alpha}e^{4u}|\nabla v|^2.
\end{equation}
Furthermore, by the harmonic map equations
\begin{equation}\label{OL}
 \Delta U
 =\Delta u+\Delta \ln\rho
 =2e^{4u}|\nabla v|^2,
\end{equation}
since $\ln\rho=0$ is harmonic.  The desired identity \eqref{eq:alpha-laplacian} now follows by combining
\eqref{eq:scalar-h-geometric}-\eqref{OL}.

To obtain \eqref{eq:key-inequality}, first observe that 
the two equations in \eqref{eq:intro-alpha-quadrature} give
\begin{equation*}
 \frac{|\grad\alpha|^2}{\rho^2}
 =\left[U_\rho^2-U_z^2+e^{4u}(v_\rho^2-v_z^2)\right]^2
 +4\left[U_\rho U_z+e^{4u}v_\rho v_z\right]^2.
\end{equation*}
The elementary identity
\[
 \bigl[(X_1^2-X_2^2)+(Y_1^2-Y_2^2)\bigr]^2
 +4(X_1X_2+Y_1Y_2)^2
 =(|X|^2-|Y|^2)^2+4(X\cdot Y)^2
\]
with $X=\grad U$ and $Y=e^{2u}\grad v$ produces
\begin{equation*}
\rho^{-1}|\nabla \alpha|\geq \Big| |\nabla U|^2 -e^{4u}|\nabla v|^2 \Big|.
\end{equation*}
Combining this with \eqref{eq:alpha-laplacian} yields
\[
 \Delta_{\rho,z}\alpha+\rho^{-1}|\grad\alpha|
 \geq -|\nabla U|^2+3e^{4u}|\nabla v|^2+\Big| |\nabla U|^2 -e^{4u}|\nabla v|^2 \Big|
 \geq 2e^{4u}|\nabla v|^2,
\]
which proves \eqref{eq:key-inequality}.
\end{proof}

\subsection{Puncture, pole, and axis estimates}\label{sec:puncture}
We will use the asymptotics obtained in Section \ref{sec:setup} to provide estimates for the Weyl conformal factor
near punctures, poles, and axis, in preparation for the maximal principle argument.

\begin{lemma}\label{lem:tangent-alpha}
The Weyl conformal factor $\alpha$ satisfies the following estimates. 
\begin{enumerate}[label=\textup{(\roman*)}]
\item\emph{Punctures.} Let $i\in\mathcal I_{\mathrm{deg}}$, and let $(r_i,\theta_i,\phi)$ be polar coordinates
centered at puncture $p_i$. Then there exists a constant $C>0$ such that in a neighborhood of this puncture
\begin{equation}\label{eq:tangent-endpoint-bound}
 \sup_{0\leq\theta_i \leq\pi}\alpha(r_i,\theta_i)
 \leq\max\{\mathbf b_i,\mathbf b_{i+1}\}+Cr_i^{\beta_i}.
\end{equation}

\item\emph{Poles.} Let $i\in\mathcal I_{\mathrm{nd}}$, and let $(r_i^{\pm},\theta_i^{\pm},\phi)$ be polar coordinates
centered at poles $q_i^{\pm}$. Then there exists a constant $C>0$ such that in a neighborhood of these poles
\begin{equation}\label{eq:lower-pole-sup}
 \sup_{0<\theta_i^-\leq\pi}\alpha(r_i^{-},\theta_i^-)
 \leq \mathbf b_i+C(r_{i}^{-})^{\gamma_i^-},
\quad
 \sup_{0\leq\theta_i^+<\pi}\alpha(r_i^+,\theta_i^+)
 \leq \mathbf b_{i+1}+C(r_i^+)^{\gamma_i^+}.
\end{equation}

\item\emph{Axis.} Let $I$ be a compact subinterval of an axis rod $\Gamma_j$.  Then there exist constants $C_I, \rho_I>0$ such that
\begin{equation}\label{eq:rod-tube}
 \sup_{z\in I}|\alpha(\rho,z)-\mathbf b_j|
 \leq C_I\rho^2,
 \quad 0\leq\rho\leq\rho_I.
\end{equation}

\end{enumerate}
\end{lemma}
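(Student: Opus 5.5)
Each of the three estimates is read off from the asymptotics collected in Proposition~\ref{prop:inputs}, together with Proposition~\ref{prop:axis-factorization} for the axis. The plan is to compare $\alpha$ with its explicit leading profile and bound that profile from above by elementary calculus.

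\emph{Punctures.} By \eqref{eq:alpha-tangent-error} we have $\alpha\le\bar\alpha_i(\theta_i)+C_ir_i^{\beta_i}$ uniformly in $\theta_i$, so it suffices to show
\[
\sup_{\theta_i}\bar\alpha_i\le\max\{\mathbf b_i,\mathbf b_{i+1}\}.
\]
Set $x=\cos\theta_i\in[-1,1]$. The argument of the logarithm in \eqref{eq:puncture-alpha-profile} is the convex quadratic $f(x)=1+x^2+2b_ix$ divided by $2\sqrt{1-b_i^2}$. A convex function on $[-1,1]$ attains its maximum at an endpoint, so
\[
\max f=\max\{2(1+b_i),\,2(1-b_i)\}.
\]
At $x=\pm1$ we get $\bar\alpha_i=\frac{\mathbf b_i+\mathbf b_{i+1}}2\pm\frac12\ln\frac{1+b_i}{1-b_i}$. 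By \eqref{eq:defect-jump} these equal $\mathbf b_{i+1}$ and $\mathbf b_i$ respectively. This consistency check at the axis directions is also the key point of the step.

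\emph{Poles.} In \eqref{eq:lower-pole-alpha} and \eqref{eq:upper-pole-alpha}, the terms $\ln\sin(\theta/2)$ and $\ln\cos(\theta/2)$ are $\le0$ on the stated ranges of $\theta$. Hence the bounds \eqref{eq:lower-pole-sup} follow immediately, provided the $O(r^\gamma)$ errors are uniform in the angle. I take that uniformity as part of the cited expansions; where $\theta$ approaches the horizon direction, $\alpha\to-\infty$, which only helps.

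\emph{Axis.} This is the step that needs an argument rather than a citation, and so the main obstacle. By Proposition~\ref{prop:axis-factorization}, on a neighborhood of $I$ we have $U=F(\rho^2,z)$ and $v-c=\rho^4G(\rho^2,z)$ with $F,G$ smooth. Substituting into \eqref{eq:intro-alpha-quadrature}: $U_\rho=2\rho F_s$ and $U_z=F_z$, so
\[
\rho U_\rho^2=O(\rho^3),\qquad \rho U_z^2=O(\rho).
\]
For the twist term, $e^{4u}=\rho^{-4}e^{4U}$ while $v_\rho=O(\rho^3)$ and $v_z=O(\rho^4)$, so $\rho e^{4u}(v_\rho^2-v_z^2)=O(\rho^3)$. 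Also $\alpha_z=2\rho[U_\rho U_z+e^{4u}v_\rho v_z]=O(\rho^2)$. Hence $\alpha_\rho=O(\rho)$ uniformly for $z\in I$. Since $\alpha$ extends continuously to the rod with constant value $\mathbf b_j$ by \eqref{eq:intro-defect-alpha-trace}, integrating in $\rho$ from $0$ gives
\[
|\alpha(\rho,z)-\mathbf b_j|\le C_I\rho^2.
\]
The care needed is only to shrink the neighborhood of $I$ so that the factorization and its uniform bounds hold there.
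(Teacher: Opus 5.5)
Your proposal is correct and follows the paper's proof essentially step for step. For the punctures you use convexity of $x\mapsto 1+x^2+2b_ix$ together with \eqref{eq:defect-jump}, for the poles the nonpositivity of $\ln\sin(\theta_i^-/2)$ and $\ln\cos(\theta_i^+/2)$, and for the axis the factorization of Proposition~\ref{prop:axis-factorization} to get $\alpha_\rho=O(\rho)$ and then integration from the rod. Your explicit evaluation of $\bar\alpha_i$ at $\theta_i=0,\pi$ as $\mathbf b_{i+1}$ and $\mathbf b_i$ simply spells out the step the paper compresses into ``combined with \eqref{eq:defect-jump}''.
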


\begin{proof}
Consider the three regions separately.\medskip

\noindent (i) \emph{Punctures.} According to Proposition \ref{prop:inputs}(ii), in a neighborhood of the puncture
\begin{equation}\label{__}
\alpha(r_i,\theta_i)
 =
 \frac{\mathbf b_i+\mathbf b_{i+1}}2
 +\ln\left(
 \frac{1+\cos^2 \theta_i +2b_i \cos\theta_i}{2\sqrt{1-b_i^2}}
 \right)+O(r_i^{\beta_i}).
\end{equation}
Set $x=\cos\theta_i$, and note that the function $x \mapsto 1+x^2+2b_i x$ is convex, so that 
its maximum on $[-1,1]$ occurs at an endpoint.  Since $\ln$ is increasing, the second term on the right-hand side of \eqref{__} is bounded above by the larger of its two pole values $\theta_i=0,\pi$.  This, combined with \eqref{eq:defect-jump} yields the desired inequality \eqref{eq:tangent-endpoint-bound}.
\medskip

\noindent (ii) \emph{Poles.} 
Equations \eqref{eq:lower-pole-alpha} and \eqref{eq:upper-pole-alpha}, together with the elementary inequalities
$\ln\sin\frac{\theta_i^-}{2}\leq0$ and $\ln\cos\frac{\theta_i^+}{2}\leq0$
give the two desired estimates immediately.  
\medskip

\noindent (iii) \emph{Axis.} 
Proposition~\ref{prop:axis-factorization} gives, uniformly for $z\in I$,
\[
 U_\rho=O(\rho),
 \quad U_z=O(1),
 \quad v_\rho=O(\rho^3),
 \quad v_z=O(\rho^4).
\]
Since $e^{4u}=\rho^{-4}e^{4U}$, the two equations in  \eqref{eq:intro-alpha-quadrature} imply
\begin{equation}\label{eq:alpha-axis-orders}
 \alpha_\rho=O(\rho),
 \quad \alpha_z=O(\rho^2).
\end{equation}
Thus, since $\alpha(0,z)=\mathbf{b}_j$ we have
\[
 |\alpha(\rho,z)-\mathbf b_j|
 \leq\int_0^\rho|\alpha_\tau(\tau,z)|dt
 \leq C_I\rho^2,
\]
which yields the desired result.
\end{proof}

\subsection{Maximum Principle}
\label{sec:global}
We will now establish the global upper bound for $\alpha$, by applying the maximum principle on a sequence of exhaustion domains.

\begin{proposition}\label{prop:global-bound}
The Weyl conformal factor $\alpha$ satisfies the following estimate
\begin{equation}\label{eq:global-alpha-bound}
 \alpha(\rho,z)\leq \max_{1\leq j\leq N+1}\mathbf b_j
 \quad\text{for all }\rho>0\text { and } z.
\end{equation}
\end{proposition}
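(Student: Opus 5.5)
Set $M=\max_{1\le j\le N+1}\mathbf b_j$. The plan is to apply the maximum principle to $\alpha$ on the open half-plane $\{\rho>0\}$, using the inequality \eqref{eq:key-inequality} of Lemma~\ref{lem:alpha-laplacian-geometric}. On any compact set avoiding $\Gamma$ this inequality reads
\[
L\alpha:=\Delta_{\rho,z}\alpha+\rho^{-1}\frac{\nabla\alpha}{|\nabla\alpha|}\cdot\nabla\alpha\ge0
\]
where $\nabla\alpha\neq0$. Hence $\alpha$ is a subsolution of a linear elliptic operator with the bounded drift $b=\rho^{-1}\nabla\alpha/|\nabla\alpha|$; where $\nabla\alpha=0$ we simply set $b=0$. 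Since $\rho$ is bounded below on such sets, the drift is bounded (merely measurable, which is fine). Equivalently, at an interior maximum $\nabla\alpha=0$, so $\Delta_{\rho,z}\alpha\ge0$ there. I would therefore invoke either the weak maximum principle for $L$ with bounded measurable drift, or a direct perturbation argument: compare with $\alpha+\delta\,\psi$ for a strictly $L$-subharmonic $\psi$, e.g. $\psi=e^{\lambda z}$ with $\lambda$ large, chosen so that $\Delta\psi>\rho^{-1}|\nabla\psi|$ on the domain. The only bad term is the drift, and it is dominated by $\lambda^2e^{\lambda z}$ once $\lambda>\rho_{\min}^{-1}$.

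Fix $\eps>0$. I build an exhaustion domain $D_{\eps,R}$ as follows. Start with the ball $B_R$. Remove thin tubes $\{\rho<\rho_0\}$ around compact pieces of the axis rods, small half-balls of radius $r_0$ around each puncture $p_i$ and each pole $q_i^\pm$, and a thin strip $\{\rho<\rho_0\}$ along the interior of each nondegenerate horizon rod. The estimates on $\partial D_{\eps,R}$ then come piece by piece:
\begin{itemize}
\item On the outer sphere, \eqref{eq:inf-alpha} gives $\alpha\le CR^{-2}\le M+\eps$ for $R$ large, using $M\ge\mathbf b_1=0$.
\item On the puncture and pole arcs, Lemma~\ref{lem:tangent-alpha}(i),(ii) gives $\alpha\le M+Cr_0^{\beta}\le M+\eps$.
\item On the axis tubes, Lemma~\ref{lem:tangent-alpha}(iii) gives $\alpha\le\mathbf b_j+C_I\rho_0^2\le M+\eps$. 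This requires the axis pieces to be joined to the puncture and pole balls, so the compact subintervals $I$ must be chosen after $r_0$ is fixed.
\item On the horizon strips, \eqref{eq:horizon-alpha-refined} gives $\alpha=\ln\rho+2u+\kappa_i+O(\rho^2)$ with $u$ bounded there, so $\alpha\to-\infty$ as $\rho\to0$. Hence $\alpha\le M$ on $\{\rho=\rho_0\}$ for $\rho_0$ small, again depending on the already-fixed compact subinterval of the horizon.
\end{itemize}
The maximum principle on $D_{\eps,R}$ then yields $\alpha\le M+\eps$ there. Letting $\rho_0\to0$, then $R\to\infty$, then $r_0\to0$ and $\eps\to0$ gives \eqref{eq:global-alpha-bound} at every point with $\rho>0$.

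The main obstacle is the bookkeeping at the junctions and the correct order of limits. The tube radius $\rho_0$ must be taken after $r_0$ and $I$, since the constants $C_I$ and the horizon $O(\rho^2)$ constants blow up as $I$ approaches the poles or punctures. The small-ball estimates must hold uniformly over the full angular range, which is exactly what Lemma~\ref{lem:tangent-alpha}(i),(ii) provide. A second point to check is that the drift $\rho^{-1}$ is harmless: $D_{\eps,R}$ stays at distance $\rho_0$ from the axis, so the drift is bounded on each domain and the maximum principle applies with constants depending on $\rho_0$. Since the comparison bound $M+\eps$ does not depend on these constants, passing to the limit causes no loss.
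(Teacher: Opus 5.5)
Your proposal is correct and follows essentially the same route as the paper. You use the same exhaustion domain $\{\rho>\rho_0\}\cap B_R$ with small half-disks about the punctures and poles removed, and the same rewriting of \eqref{eq:key-inequality} as a linear subsolution inequality with a bounded measurable drift of size at most $\rho_0^{-1}$. Your boundary estimates also match the paper's: Lemma~\ref{lem:tangent-alpha}, \eqref{eq:horizon-alpha-refined}, and \eqref{eq:inf-alpha}, with $M\geq\mathbf b_1=0$ used on the outer arc. The differences are cosmetic: you make the weak maximum principle explicit via the barrier $e^{\lambda z}$ with $\lambda>\rho_0^{-1}$, and you take the limits in a slightly different but equally valid order.
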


\begin{proof}
Set $M=\max_{1\leq j\leq N+1}\mathbf b_j$, and observe that $M\geq 0$, since according to
\eqref{eq:intro-outer-defects} the logarithmic angle defects of the semi-infinite rods vanish.
Let
\[
 \mathcal Q=
 \{p_i \mid i\in\mathcal I_{\mathrm{deg}}\}
 \cup\{q_i^-,q_i^+ \mid i\in\mathcal I_{\mathrm{nd}}\}
\]
be the finite set of punctures and poles, and for $q=(0,0,z_q)\in\mathcal Q$, put
\[
 r_q=\sqrt{\rho^2+(z-z_q)^2}.
\]
Choose $r_*>0$ so that all puncture and pole estimates from Lemma \ref{lem:tangent-alpha} hold for $r_q\leq2r_*$.  Since $\mathcal Q$ is finite, there is a common exponent
\[
 \kappa_*=
 \min\bigl(
 \{\beta_i \mid i\in\mathcal I_{\mathrm{deg}}\}
 \cup
 \{\gamma_i^-,\gamma_i^+ \mid i\in\mathcal I_{\mathrm{nd}}\}
 \bigr)>0.
\]
Fix $\sigma>0$ smaller than $r_*$ and smaller than a quarter of the distance between any two distinct points of $\mathcal Q$.  Choose $R$ so large that all points of $\mathcal{Q}$ lie in the half-disk of radius $R/2$, and take $0<\eta<\sigma/2$. 
Define
\begin{equation}\label{eq:exhaustion-domain}
 \Omega_{\eta,\sigma,R}
 =\left\{(\rho,z)\mid 
 \rho>\eta,\ \rho^2+z^2<R^2,
 \ r_q>\sigma\ \text{for every }q\in\mathcal Q
 \right\},
\end{equation}
and observe that this domain is bounded with piecewise smooth boundary.

On $\Omega_{\eta,\sigma,R}$ set
\begin{equation}\label{eq:drift-B}
 B(\rho,z)=
 \begin{cases}
 \displaystyle\frac{\grad\alpha}{\rho|\grad\alpha|},&\grad\alpha\neq0,\\[6pt]
 0,&\grad\alpha=0,
 \end{cases}
\end{equation}
and notice that $B$ is bounded and measurable, with $|B|\leq\eta^{-1}$. Furthermore, by Lemma~\ref{lem:alpha-laplacian-geometric} we have
\begin{equation}\label{eq:linearized-alpha-subsolution}
 \Delta_{\rho,z}\alpha+B\cdot\grad\alpha\geq0.
\end{equation}
The maximum principle then yields
\begin{equation}\label{eq:finite-max-principle}
 \sup_{\Omega_{\eta,\sigma,R}}\alpha
 \leq\sup_{\partial\Omega_{\eta,\sigma,R}}\alpha.
\end{equation}

We will now estimate the boundary pieces.  On a half-circle $r_q=\sigma$, Lemma~\ref{lem:tangent-alpha}(i)-(ii) apply when $q$ is a puncture or pole.  Since there are finitely many punctures or poles, we have
\begin{equation}\label{eq:marked-boundary-estimate}
 \alpha\leq M+C\sigma^{\kappa_*}.
\end{equation}
Next consider the boundary portions with $\rho=\eta$.  The condition $r_q>\sigma$ and the inequality $\eta<\sigma/2$ imply
\[
 |z-z_q|>\sqrt{\sigma^2-\eta^2}>\frac{\sqrt3}{2}\sigma.
\]
For fixed $\sigma$ and $R$, the projections of these portions onto $\Gamma$ lie in a finite union of compact subintervals, each contained either in an axis rod or in the interior of a nondegenerate horizon rod.  On the axis pieces, Lemma~\ref{lem:tangent-alpha}(iii) gives
\begin{equation}\label{eq:axis-boundary-estimate}
 \alpha\leq M+C_{\sigma,R}\eta^2,
\end{equation}
while on the horizon pieces, \eqref{eq:horizon-alpha-refined} gives
\begin{equation}\label{eq:horizon-boundary-estimate}
 \alpha\leq\ln\eta+C_{\sigma,R}.
\end{equation}
Lastly, on the outer half-circle \eqref{eq:inf-alpha} shows that
\begin{equation}\label{eq:outer-boundary-estimate}
 |\alpha|\leq CR^{-2}.
\end{equation}
Combining these estimates together with \eqref{eq:finite-max-principle} yields
\begin{equation}\label{eq:exhaustion-bound}
 \sup_{\Omega_{\eta,\sigma,R}}\alpha
 \leq\max\left\{
 M+C\sigma^{\kappa_*},\,
 M+C_{\sigma,R}\eta^2,\,
 \ln\eta+C_{\sigma,R},\,
 CR^{-2}
 \right\}.
\end{equation}

Fix $(\rho_0,z_0)$ arbitrarily with $\rho_0>0$.  For all sufficiently large $R$ and sufficiently small $\sigma,\eta$, this point belongs to $\Omega_{\eta,\sigma,R}$.  In \eqref{eq:exhaustion-bound}, first let $\eta\rightarrow 0$ with $\sigma,R$ fixed. Then let $\sigma\rightarrow 0$, and finally let $R\to\infty$.  Since $M\geq0$, 
it follows that $\alpha(\rho_0,z_0)\leq M$. 
\end{proof}

\section{Weyl Conformal Factor Strict Local Lower Bound}\label{sec:strict}

The purpose of this section is to prove a strict local lower bound for $\alpha$ near any bounded axis rod. The proof must cover arbitrarily degenerate maxima of the axis trace of $U$; this is the reason for the expansion by homogeneous harmonic polynomials below. We will use the first equation of \eqref{eq:intro-alpha-quadrature}, and demonstrate that, under careful analysis of integration over appropriate line segments, the $\rho$-derivative terms dominate the $z$-derivatives. This is reminiscent of the argument of Li--Tian \cite{LiTian1991}, where the $z$-derivatives vanish due to the assumed symmetry in that work.

\begin{proposition}\label{prop:strict-rod}
Let $\Gamma_i=(z_{i-1}^+,z_i^-)$, $2\leq i\leq N$, be a bounded open axis rod on which $v=c_i$ and $\alpha=\mathbf b_i$.  
If $q\in \Gamma_i$ is a maximum point of $z\mapsto U(0,z)$, then every neighborhood of $(0,q)$ in the 
half-plane $\rho>0$ contains a point where
\begin{equation}\label{eq:strict-lower-bound}
 \alpha>\mathbf b_i.
\end{equation}
\end{proposition}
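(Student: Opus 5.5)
We use the factorization of Proposition~\ref{prop:axis-factorization} near $(0,q)$ together with the first quadrature in \eqref{eq:intro-alpha-quadrature}, integrated along horizontal segments $\{z=z_0\}$ starting on the axis. Since $\alpha(0,z)=\mathbf b_i$, for small $\rho_0>0$ we have
\[
 \alpha(\rho_0,z_0)-\mathbf b_i=\int_0^{\rho_0}\rho\left[U_\rho^2-U_z^2+\rho^{-4}e^{4U}(v_\rho^2-v_z^2)\right](\rho,z_0)\,d\rho .
\]
The task is to find $z_0$ near $q$ and small $\rho_0$ at which the positive terms $U_\rho^2+\rho^{-4}e^{4U}v_\rho^2$ dominate the negative terms $U_z^2+\rho^{-4}e^{4U}v_z^2$. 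We write $U=F(\rho^2,z)$ and $v=c_i+\rho^4G(\rho^2,z)$. Then
\[
 U_\rho=2\rho F_s,\quad U_z=F_z,\quad v_\rho=\rho^3(4G+2\rho^2G_s),\quad v_z=\rho^4G_z .
\]
Hence $\rho^{-4}v_\rho^2\approx 16\rho^2G^2$, while $\rho^{-4}v_z^2=\rho^4G_z^2$.

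\textbf{Case where $v\equiv c_i$ near $q$.} Then $U$ is an axisymmetric harmonic function near $(0,q)$ in $\R^3$. If $U$ is constant nearby, then $\alpha\equiv\mathbf b_i$ locally. Real-analyticity and unique continuation of harmonic functions would then make $U$ constant on the whole rod, which contradicts $U(0,z)\to-\infty$ at the endpoints. Otherwise we expand $U-U(0,q)=P_k+O(r^{k+1})$, where $P_k$ is the lowest nonzero homogeneous harmonic polynomial. Axisymmetry forces $P_k=a\,r^kP_k^{\mathrm{Leg}}(\cos\theta)$, a zonal harmonic, and $k\ge2$ because $q$ is a maximum of the trace. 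On the segment $z=q$ we have $\theta=\pi/2$. For even $k$ the Legendre polynomial $P_k(0)\ne0$, and $U_z$ is then governed by $P_k'(0)=0$. For odd $k$ we have $U_z\ne0$ at $\theta=\pi/2$, but a trace maximum cannot have odd leading order. So for even $k$ we compare $U_\rho^2\sim c\rho^{2k-2}$ with $U_z^2=O(\rho^{2k})$ along $z=q$. The integral is then positive for small $\rho_0$, giving $\alpha>\mathbf b_i$. In the general degenerate situation one may instead need the segment $z=q+\delta$, chosen so that the ratio $|U_z|/|U_\rho|$ is small. I would verify this via the explicit formula $U_\rho^2-U_z^2=\mathrm{Re}\big((U_\rho-iU_z)^2\big)$ on the leading homogeneous term, choosing an angle where the leading contribution is positive.

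\textbf{Case where $v$ is nonconstant.} Strong unique continuation for \eqref{eq:psi-lifted} in $\R^7$ shows that $\widehat G$ vanishes to finite order at $(0,q)$. Its leading part is an $\SO(6)$-invariant homogeneous polynomial of degree $m$ solving $\Delta_{\R^7}$ to leading order, since the lower-order terms are of lower homogeneity. This polynomial is therefore a zonal harmonic in $\R^7$, namely a Gegenbauer polynomial $C^{(5/2)}_m(\cos\theta)\,r^m$. Similarly, the leading part of $U-U(0,q)$ is a degree-$k$ zonal harmonic in $\R^3$, modulo the right-hand side $2e^{4u}|\nabla v|^2=O(\rho^2\cdot r^{2m})$, which is of higher order. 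Along a suitable segment, the $v$-contribution is
\[
 \int\rho^3e^{4U}\left(16G^2-\rho^2G_z^2\right)d\rho\sim\int\rho^{3+2m}\,d\rho>0,
\]
because $G^2\gtrsim r^{2m}$ at generic angles, while $\rho^2G_z^2\lesssim\rho^2 r^{2m-2}$. This term is positive and dominates whenever the $U$-terms are of higher order. If the $U$-terms are of comparable order, both pieces have a favorable sign at suitably chosen angles by the computation above.

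\textbf{Expected main obstacle.} The hardest step is the bookkeeping of the competition between $-U_z^2$ and the positive terms when $k$ and $m$ have arbitrary relative sizes. It must be shown that some ray or segment in the half-plane makes the leading homogeneous integrand strictly positive. To handle this, I would parametrize the segment $z=q+t\rho_0$ with a fixed slope $t$ and compute the leading coefficient as a function of $t$ and the degrees. I would then use the Legendre and Gegenbauer explicit forms, together with the evenness of $k$ forced by the maximum, to exhibit a $t$ at which this coefficient is positive.
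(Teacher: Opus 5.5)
Your framework matches the paper's. You integrate the first quadrature along horizontal segments from the axis, use $U=F(\rho^2,z)$ and $v=c_i+\rho^4G(\rho^2,z)$, apply strong unique continuation to the seven-dimensional lift, identify leading terms as zonal harmonics in $\R^3$ and $\R^7$, and use evenness of $k$ at a trace maximum. Your static case is correct, and more self-contained than the paper's appeal to the explicit static solution; the hedge about $z=q+\delta$ is unnecessary there.

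The gap is the nonconstant-$v$ case, which you defer as the ``main obstacle''. The heuristic you offer for it fails exactly where it matters. Write the leading part of $G$ as $\sigma H^{(7)}_m$, with $H^{(7)}_m(0,z)=z^m$, and set $f(q)=U(0,q)$.
\begin{itemize}
\item The approximation $\rho^{-4}v_\rho^2\approx16\rho^2G^2$ drops $2\rho^2G_s$, which is of the same order as $G$ unless $\rho\ll|z-q|$.
\item Comparing $G^2\gtrsim r^{2m}$ with $\rho^2G_z^2\lesssim\rho^2r^{2m-2}$ gives no sign where $\rho\sim r$.
\item Take the segment $z=q$ through the maximum with $m$ odd. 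There $H^{(7)}_m(\rho,0)\equiv0$, while $\partial_zH^{(7)}_m(\rho,0)=a\rho^{m-1}$ with $a\neq0$. So the leading $v$-integrand is $-e^{4f(q)}\sigma^2a^2\rho^{2m+3}<0$. If $k>m+2$ this term dominates, and $\alpha(\rho,q)<\mathbf b_i$ for all small $\rho>0$.
\end{itemize}
For $k>m+2$, what works is the near-axis segment $\{z=z_0,\ 0\le\rho\le\eta|z_0-q|\}$ with $\eta>0$ small and fixed. Since $H^{(7)}_m(\xi,1)=1+O(\xi^2)$, the $v$-contribution is $e^{4f(q)}\sigma^2\bigl(4\eta^4+O(\eta^6)\bigr)|z_0-q|^{2m+4}+o(|z_0-q|^{2m+4})$. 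The $U$-contribution is $O(|z_0-q|^{2L})$ with $L=\min\{k,2m+4\}>m+2$. Only the vanishing order of $U-f(q)$ enters here. This matters because, contrary to your blanket claim, the source $2e^{4u}|\nabla v|^2$ has degree $2m+2$ and is not negligible for the leading part of $U$ when $k\ge 2m+4$.

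Second, no single family of segments works in all cases, so ``suitably chosen angles'' hides the real case split.
\begin{itemize}
\item For $k=m+2$ the near-axis segments fail. With $U-f(q)\approx-\kappa H^{(3)}_k$, $\kappa>0$, one gets $\int_0^{\eta|z_0-q|}\rho\,(U_\rho^2-U_z^2)\,d\rho=\bigl(-\tfrac12\kappa^2k^2\eta^2+O(\eta^4)\bigr)|z_0-q|^{2k}+o(|z_0-q|^{2k})$. Near the axis $-U_z^2$ dominates, and this beats the $O(\eta^4)$ $v$-term.
\item So for $k=m+2$ you must use $z=q$. Evenness of $k$ forces $m$ even, so $H^{(7)}_m(\rho,0)=\mu\rho^m$ with $\mu\neq0$ and $\partial_zH^{(7)}_m(\rho,0)=0$. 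Both leading coefficients are then positive; the $v$-coefficient involves $(m+4)^2\mu^2$, not $16\mu^2$.
\item For $k<m+2$, the segment $z=q$ works with the $U$-term dominating, as in your static computation.
\end{itemize}
This split is precisely the missing argument: use $z=q$ when $k\le m+2$, and near-axis segments when $k>m+2$. It needs no explicit Legendre or Gegenbauer formulas. It uses only the recurrence \eqref{eq:zonal-recurrence-general} and its consequences on the equator and near the axis.
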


\begin{proof}
After translating $q$ to the origin, we may assume that $q=0$.  Near the rod, by Proposition~\ref{prop:axis-factorization}, we can write
\begin{equation}\label{eq:strict-local-notation}
 U=F(s,z),
 \quad v=c_i+s^2G(s,z),
\end{equation}
where $s=\rho^2$ and $F$, $G$ are smooth.
Set
\[
 f(z)=F(0,z),
 \quad g(z)=G(0,z).
\]
Then $\alpha(0,z)=\mathbf b_i$ and $f'(0)=0$. 

We now record the relevant equations in terms of the functions $F$ and $G$, and with respect to the variable $s=\rho^2$. 
A direct substitution of \eqref{eq:strict-local-notation} into the harmonic map equations \eqref{eq:HM-U}-\eqref{eq:HM-v-U} and the first equation in  \eqref{eq:intro-alpha-quadrature} 
produces
\begin{align}
 4(sF_{ss}+F_s)+F_{zz}
 &=2e^{4F}\left[4s(2G+sG_s)^2+s^2G_z^2\right],
 \label{eq:F-equation}\\
 4sG_{ss}+12G_s+G_{zz}
 &=-16sF_sG_s-4F_zG_z-32F_sG,
 \label{eq:Psi-equation}
\end{align}
and 
\begin{equation}
 2\alpha_s
 =4sF_s^2-F_z^2+4se^{4F}(2G+sG_s)^2 -s^2e^{4F}G_z^2,
 \quad \alpha(0,z)=\mathbf b_i.
 \label{eq:A-s-equation}
\end{equation}

\medskip
\noindent\emph{Step 1:}
Setting $s=0$ in \eqref{eq:F-equation} gives
\begin{equation}\label{eq:Fs-fpp}
 F_s(0,0)=-\frac14f''(0).
\end{equation}
Along $z=0$, smoothness and $f'(0)=0$ yield
\begin{align*}
 F_s(s,0)=F_s(0,0)+O(s),\quad
 F_z(s,0)=sF_{sz}(0,0)+O(s^2),
\end{align*}
and 
\begin{align*}
 G(s,0)=g(0)+O(s),\quad
 G_z(s,0)&=g'(0)+O(s).
\end{align*}
Substitution into \eqref{eq:A-s-equation}, followed by integration in $s$, yields
\begin{equation}\label{eq:A-second-order}
 \alpha(s,0)=\mathbf b_i+s^2\left[
 \frac{1}{16}f''(0)^2+4e^{4f(0)}g(0)^2
 \right]+O(s^3).
\end{equation}
Hence \eqref{eq:strict-lower-bound} follows immediately unless
\begin{equation}\label{eq:fully-degenerate-second}
 f''(0)=0,
 \quad g(0)=0.
\end{equation}
Thus, we now assume \eqref{eq:fully-degenerate-second} for the remainder of the proof.

\medskip
\noindent\emph{Step 2:}
For $n\geq3$ and $m\geq0$, let $\cZ_m^{(n)}$ be the space of zonal homogeneous polynomials $P$ of degree $m$ on
\[
 \R^{n-1}_x\times\R_z
\]
which are invariant under $\SO(n-1)$ in the $x$ variables and harmonic in $\R^n$.  Every such polynomial has the form
\begin{equation}\label{eq:zonal-polynomial-form}
 P(x,z)=\sum_{j=0}^{\lfloor m/2\rfloor}
 a_j\rho^{2j}z^{m-2j},
\end{equation}
where $\rho=|x|$.
Since
\[
 \Delta_x \rho^{2(j+1)}
 =2(j+1)(2j+n-1)\rho^{2j},
\]
comparison of the coefficients in $\Delta_{\R^n}P=0$ produces
\begin{equation}\label{eq:zonal-recurrence-general}
 2(j+1)(2j+n-1)a_{j+1}
 +(m-2j)(m-2j-1)a_j=0.
\end{equation}
Thus $a_0$ determines every coefficient.  Conversely, the recurrence produces a harmonic polynomial for every choice of $a_0$; at the final index the remaining power of $z$ is zero or one, and no additional condition occurs.  Therefore restriction to the $z$-axis is an isomorphism
\begin{equation}\label{eq:zonal-restriction-iso}
 \cZ_m^{(n)}\longrightarrow\operatorname{span}\{z^m\},
 \quad P\longmapsto P(0,z).
\end{equation}
In particular, $\cZ_m^{(n)}$ is one-dimensional.  Let $H_m^{(n)}$ denote its unique element normalized by
\begin{equation}\label{eq:H-normalization}
 H_m^{(n)}(0,z)=z^m.
\end{equation}
Writing $r=(\rho^2+z^2)^{1/2}$, the two cases used below are
\begin{align}
 H_m^{(3)}(\rho,z)=r^mP_m(z/r),\quad
 H_m^{(7)}(\rho,z)=
 r^m\frac{C_m^{5/2}(z/r)}{C_m^{5/2}(1)},
 \label{eq:H7-Gegenbauer}
\end{align}
where $P_m$ and $C_m^{5/2}$ are the Legendre and Gegenbauer polynomials.  The explicit formulas are not essential; what matters is \eqref{eq:zonal-restriction-iso}.  We shall also use the consequence of the recurrence~\eqref{eq:zonal-recurrence-general} that, for even $m$,
\begin{equation}\label{eq:H-even-equator-nonzero}
 H_m^{(n)}(\rho,0)=\lambda_m^{(n)}\rho^m
 \text{ with }\lambda_m^{(n)}\neq0,
 \quad
 \partial_z H_m^{(n)}(\rho,0)=0.
\end{equation}

\noindent\emph{Step 3:}
Suppose first that $v$ is constant. Then $U$ is an axisymmetric harmonic function on $\R^3\setminus\Gamma$. This is the static case treated in~\cite{BachWeyl1922,BachWeyl2012Reprint} for two black holes, and generalized to any number of black holes in~\cite{IsraelKhan1964}. The explicit formula for $\alpha$ in this case shows~\eqref{eq:strict-lower-bound}.
For the remainder of the proof assume that $v$ is not constant.

\medskip
\noindent\emph{Step 4:} 
We study the expansion of $G$. 
Let $\widehat G(X,z)$ be the smooth seven-dimensional lift of $G(\rho^2,z)$ constructed in the proof of Proposition~\ref{prop:axis-factorization}.  
We first assume that $\widehat G(X,z)$ vanishes up to infinite order at the origin. Recall that $\widehat G(X,z)$ satisfies \eqref{eq:psi-lifted}. The standard unique continuation \cite{Aronszajn1957} yields $\widehat G\equiv0$ in a neighborhood of the origin.  Hence $v-c_i=\rho^4 G$ vanishes on a nonempty open subset of $\{\rho>0\}$.  However, $v-c_i$ satisfies the linear uniformly elliptic equation
\begin{equation}\label{eq:w-unique-continuation}
 \Delta (v-c_i)+4\grad u\cdot\grad (v-c_i)=0.
\end{equation}
The unique continuation then forces $v-c_i\equiv0$, so $v$ is globally constant, contrary to the hypothesis.  
We conclude that $\widehat G(X,z)$ does not vanish up to infinite order at the origin. 

With $\widehat G(0,0)=g(0)=0$, we assume
$$\widehat G=Q_d+O(r^{d+1}),$$
for some $d\ge 1$ and nonzero homogeneous $\SO(6)$-invariant polynomial $Q_d$. Moreover, $Q_d$ is harmonic.  Indeed, this is immediate for $d=1$, and for $d\geq2$, the contributions generated by $Q_d$ in \eqref{eq:psi-lifted} have lowest possible homogeneous degrees $d-2$, $d-1$, and $d$, 
from the Laplacian, first-order term, and zeroth-order term, respectively. It follows that $Q_d\in\cZ_d^{(7)}$, and hence $Q_d=\sigma_dH_d^{(7)}$ 
for some $\sigma_d\neq0$. 
Taylor's theorem, with one differentiated remainder, yields
\begin{equation}\label{eq:psi-leading-mode}
 \left|D^\ell\bigl(h-\sigma_dH_d^{(7)}\bigr)\right|
 \leq Cr^{d+1-\ell},
 \qquad \ell=0,1,
\end{equation}
where derivatives are understood in the lifted variables and then restricted to the orbit half-plane.

\medskip
\noindent{\it Step 5:} 
We study the expansion of $U$ and examine the validity of the estimates
\begin{equation}\label{eq:U-case3-order}
 U=f(0)+O(r^L),
 \qquad |\grad U|=O(r^{L-1}).
\end{equation}
It is possible that $U$ vanishes up to infinite order at the origin. 
In this case, \eqref{eq:U-case3-order} holds for any $L\ge 1$.
We now assume that $U$ vanishes up to a finite order at the origin, i.e., 
\begin{equation}\label{eq:U-step6-order}U=f(0)+U_k+O(r^{k+1}),\end{equation}
for some $k\ge 1$ and nonzero homogeneous polynomial $U_k$. 
The expansion here is in the original three-dimensional variables. 
In this case, \eqref{eq:U-case3-order} holds for any $1\le L\le k$.


We write the equation \eqref{eq:U-psi-equation} as 
$$\Delta_{\R^3} U=S,$$
where 
\begin{align}\label{eq:def-source-S}
 S
 =2e^{4U}\left[\rho^2(4h+\rho h_\rho)^2
 +\rho^4 h_z^2\right].
\end{align}
A simple comparison gives
\begin{equation}\label{eq:homogeneous-Poisson}
 \Delta_{\R^3}U_k= S_{k-2},
\end{equation}
where $S_{k-2}$ is the homogeneous polynomial of degree $k-2$ in the Taylor expansion of $S$. 
In the following, we only need the case $k<2d+4$. 
By \eqref{eq:psi-leading-mode}, we have $S=O(r^{2d+2})$. 
Hence, if $k<2d+4$, then $S_{k-2}=0$ and thus $\Delta_{\R^3}U_k=0$. 
As in Step 4, we have $U_k=c_kH_k^{(3)}$ for some $c_k\neq 0$. Therefore, 
$$U=f(0)+c_kH_k^{(3)}+O(r^{k+1}).$$
By assumption, $U(0,z)$ has a maximum at $z=0$. 
With $H_k^{(3)}(0,z)=z^k$, we conclude that $k$ is even and $c_k<0$. 

In conclusion, if $k$ is the first nonzero Taylor order of $U(\rho, z)-f(0)$ and $k<2d+4$, then 
\begin{equation}\label{eq:U-case1}
 \left|D^\ell\bigl(U-f(0)+\kappa H_k^{(3)}\bigr)\right|
 \leq Cr^{k+1-\ell},
 \qquad \ell=0,1, 
\end{equation}
for some even $k$ and constant $\kappa>0$.

\medskip
\noindent{\it Step 6:} 
We now study the equation  
\begin{equation}
\label{eq:alpha-rho-h}
\alpha_\rho
 =\rho(U_\rho^2-U_z^2)
   +\rho^3e^{4U}\big[(4 h+\rho h_\rho)^2-\rho^2 h_z^2\big],
\end{equation}
and examine which term on the right-hand side provides the lowest degree when expanded in terms of $\rho$. 
We note that the lowest degree of $\rho$ from $h$ is $2d+3$ by \eqref{eq:psi-leading-mode},  
and the lowest degree of $\rho$ from $U$ is $2k-1$ by \eqref{eq:U-step6-order} if $k$ is finite and $k<2d+4$. 
By comparing $2d+3$ and $2k-1$, 
we consider three exhaustive cases, 
$k<d+2$, $k=d+2$, and $k>d+2$, including $k=\infty$. We point out that $k<2d+4$ in the first two cases and hence \eqref{eq:U-case1} is applicable.

\smallskip
{\it Case I: $k<d+2$.}
 Since $k$ is even, \eqref{eq:H-even-equator-nonzero} produces
\[
 H_k^{(3)}(\rho,0)=\lambda_k\rho^k,
 \quad \lambda_k\neq0,
 \quad \partial_z H_k^{(3)}(\rho,0)=0.
\]
By \eqref{eq:alpha-rho-h}, we have
\begin{align}\label{eq:alpha-rho-h-1}
 U_\rho(\rho,0)=-\kappa k\lambda_k\rho^{k-1}+o(\rho^{k-1}),\quad
 U_z(\rho,0)=o(\rho^{k-1}).
\end{align}
By \eqref{eq:psi-leading-mode}, the contribution of $h$ to $\alpha_\rho$ in \eqref{eq:alpha-rho-h} 
is $O(\rho^{2d+3})=o(\rho^{2k-1})$.  Thus \eqref{eq:alpha-rho-h} gives
\[
 \alpha_\rho(\rho,0)=\kappa^2k^2\lambda_k^2\rho^{2k-1}
 +o(\rho^{2k-1}).
\]
Since $\alpha(0,0)=\mathbf b_i$,
\begin{equation}\label{eq:A-case1}
 \alpha(\rho,0)=\mathbf b_i+\frac{k}{2}\kappa^2\lambda_k^2\rho^{2k}
 +o(\rho^{2k})>\mathbf b_i
\end{equation}
for all sufficiently small $\rho>0$.

\smallskip
{\it Case II: $k=d+2$.}
Here $d$ is even.  Set
\[
 \lambda_k=H_k^{(3)}(1,0)\neq0,
 \quad
 \mu_d=H_d^{(7)}(1,0)\neq0.
\]
At $z=0$,   we have \eqref{eq:alpha-rho-h-1} and, by \eqref{eq:psi-leading-mode},
\begin{align*}
  h=\sigma_d\mu_d\rho^d+o(\rho^d),\quad
  h_z=o(\rho^{d-1}),
\end{align*}
and 
\begin{align*} 4 h+\rho h_\rho
 =\sigma_d(d+4)\mu_d\rho^d+o(\rho^d).
\end{align*}
Hence
\[
 \alpha_\rho(\rho,0)
 =\big[
 \kappa^2k^2\lambda_k^2
 +e^{4f(0)}\sigma_d^2(d+4)^2\mu_d^2
 \big]\rho^{2d+3}
 +o(\rho^{2d+3}).
\]
Integration yields
\begin{equation}\label{eq:A-case2}
 \alpha(\rho,0)=\mathbf b_i+
 \Big[
 \frac{k}{2}\kappa^2\lambda_k^2
 +\frac{(d+4)^2}{2d+4}e^{4f(0)}\sigma_d^2\mu_d^2
 \Big]\rho^{2d+4}
 +o(\rho^{2d+4})>\mathbf b_i.
\end{equation}

\smallskip
{\it Case III: $d+2<k$, including $k=\infty$.} 
The discussion in this case is more complicated mainly because $d$ in \eqref{eq:psi-leading-mode} is not known to be even, in which case $\rho h_z$ in \eqref{eq:alpha-rho-h} has the same order as $h$ and $\rho h_\rho$.
Setting
\begin{equation}\label{eq:def-L}
 L=\min\{k,2d+4\},
\end{equation}
then, \eqref{eq:U-case3-order} holds for this $L$. 
Since $L>d+2$, for every fixed $\eta>0$,
\begin{equation}\label{eq:U-stress-negligible}
 \int_0^{\eta z}\rho|\grad U|^2 d\rho
 =O(z^{2L})=o(z^{2d+4})
 \quad \text{as } z\downarrow0.
\end{equation}

Define the homogeneous polynomial
\begin{equation}\label{eq:def-Wd}
 W_d(\xi,\tau)=\xi^4H_d^{(7)}(\xi,\tau)
\end{equation}
and, for $\eta>0$,
\begin{equation}\label{eq:def-Id}
 I_d(\eta)=\int_0^\eta\xi^{-3}
 \left[
 (\partial_\xi W_d(\xi,1))^2
 -(\partial_\tau W_d(\xi,1))^2
 \right]d\xi.
\end{equation}
The normalization $H_d^{(7)}(0,\tau)=\tau^d$ and the recurrence \eqref{eq:zonal-recurrence-general} imply, as $\xi\downarrow0$, that
\[
 H_d^{(7)}(\xi,1)=1+O_d(\xi^2),
\]
so
\[
 \partial_\xi W_d(\xi,1)=4\xi^3+O_d(\xi^5),
 \quad
 \partial_\tau W_d(\xi,1)=d\xi^4+O_d(\xi^6).
\]
Consequently,
\begin{equation}\label{eq:Id-positive}
 I_d(\eta)=4\eta^4+O_d(\eta^6)>0
\end{equation}
for every sufficiently small fixed $\eta>0$.

We now justify that \eqref{eq:def-Id} is the leading term in $\alpha(\eta z,z)-\mathbf b_i$.  Write
\[
 R= h-\sigma_dH_d^{(7)}.
\]
On $0\leq\rho\leq\eta z$, estimate \eqref{eq:psi-leading-mode} yields
\[
 |R|\leq Cz^{d+1},
 \quad |\grad R|\leq Cz^d.
\]
Since $v-c=\rho^4(\sigma_dH_d^{(7)}+R)$ and $\rho=\xi z$, homogeneity gives
\begin{align}
 v_\rho(\xi z,z)
 &=\sigma_dz^{d+3}\partial_\xi W_d(\xi,1)
 +O\bigl(z^{d+4}(\xi^3+\xi^4)\bigr),
 \label{eq:v-rho-ray}\\
 v_z(\xi z,z)
 &=\sigma_dz^{d+3}\partial_\tau W_d(\xi,1)
 +O(z^{d+4}\xi^4).
 \label{eq:v-t-ray}
\end{align}
The powers of $\xi$ in these errors make all integrals below convergent at $\xi=0$.

Because $\alpha(0,z)=\mathbf b_i$ and $e^{4u}=\rho^{-4}e^{4U}$, \eqref{eq:alpha-rho-h} implies 
\begin{equation}\label{eq:A-ray-integral}
 \alpha(\eta z,z)=\mathbf b_i+\int_0^{\eta z}
 \left\{
 \rho(U_\rho^2-U_z^2)
 +e^{4U}\rho^{-3}(v_\rho^2-v_z^2)
 \right\}d\rho.
\end{equation}
The first term in the integrand is $o(z^{2d+3})$ by \eqref{eq:U-stress-negligible}.  In the second term, the leading/error cross terms from \eqref{eq:v-rho-ray}--\eqref{eq:v-t-ray} are $O(z^{2d+4})$, and the error squares are $O(z^{2d+5})$.  Moreover, \eqref{eq:U-case3-order} produces
\[
 e^{4U}=e^{4f(0)}(1+o(1)),
\]
uniformly on the ray segment.  Substituting $\rho=\xi z$ into \eqref{eq:A-ray-integral} therefore yields
\begin{equation}\label{eq:A-case3}
 \alpha(\eta z,z)
 =\mathbf b_i+e^{4f(0)}\sigma_d^2I_d(\eta)z^{2d+4}
 +o(z^{2d+4})>\mathbf b_i,
\end{equation}
for all sufficiently small $z>0$, by \eqref{eq:Id-positive}.

Each of the three cases gives points with $\alpha>\mathbf b_i$ arbitrarily close to the origin.  
This proves \eqref{eq:strict-lower-bound}.
\end{proof}

We note that only the smoothness obtained in Proposition~\ref{prop:axis-factorization}, and finite Taylor expansions are used.  Infinite order vanishing of the lifted $v$ is treated by unique continuation.  No real analytic regularity of the harmonic map is assumed.

\section{Proof of the Main Theorem}\label{sec:proof-main}

We first verify the endpoint behavior required to apply Proposition~\ref{prop:strict-rod}.  

\begin{lemma}\label{lem:endpoint-blowdown}
For $2\leq j\leq N$, let $f_j(z)=U(0,z)$ for $z_{j-1}^+<z<z_j^-$.
Then,
\begin{align}
 f_j(z)
 \rightarrow-\infty
 \quad\text{as }z\downarrow z_{j-1}^+ \text{ and }z\uparrow z_j^-.
 \label{eq:left-end-blowdown}
\end{align}
Consequently, $f_j$ attains a maximum at an interior point of $\Gamma_j$.
\end{lemma}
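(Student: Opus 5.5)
We prove \eqref{eq:left-end-blowdown} separately at each endpoint of $\Gamma_j$. Each endpoint is either a pole of a nondegenerate horizon rod or a degenerate puncture, and in both cases we read off the behavior of the axis trace $U(0,z)$ from Proposition~\ref{prop:inputs}. Continuity of $f_j$ on the open interval $\Gamma_j$ comes from Proposition~\ref{prop:inputs}\ref{concl:open-axis}. Once both endpoint limits are $-\infty$, $f_j$ is continuous on a bounded open interval with $f_j\to-\infty$ at both ends, so it attains a maximum at an interior point. This gives the final assertion of the lemma.

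\emph{Nondegenerate endpoints.} Suppose the left endpoint $z_{j-1}^+$ is the upper pole $q_{j-1}^+$ of $\mathcal H_{j-1}$. Points of $\Gamma_j$ near it have $\rho=0$ and $\zeta_{j-1}^+=z-z_{j-1}^+>0$, so $r_{j-1}^+=\zeta_{j-1}^+$. Expansion \eqref{eq:upper-pole-U} then gives
\[
U(0,z)=\tfrac12\ln\bigl(2(z-z_{j-1}^+)\bigr)+\kappa_{j-1}^++O\bigl((z-z_{j-1}^+)^{\gamma_{j-1}^+}\bigr),
\]
which tends to $-\infty$. The right endpoint $z_j^-$, when it is the lower pole of $\mathcal H_j$, is handled the same way: there $\zeta_j^-<0$, so $r_j^--\zeta_j^-=2|z-z_j^-|$, and \eqref{eq:lower-pole-U} yields $\tfrac12\ln(2|z-z_j^-|)+O(1)\to-\infty$. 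We must check that the remainder bounds in \eqref{eq:lower-pole-U}--\eqref{eq:upper-pole-U} extend continuously to the axis trace. This holds because $U$ is continuous up to the open rod.

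\emph{Degenerate endpoints.} If the endpoint is a puncture $p_i$ (with $i=j-1$ or $i=j$), we use \eqref{eq:puncture-decomposition} with $\theta_i=0$ or $\pi$. Here $U=\ln r_i+\bar U_i(\theta_i)+\widetilde U_i$. From \eqref{eq:bar-U-i}, $\bar U_i(0)$ and $\bar U_i(\pi)$ are finite, since $1\pm 2b_i+1=2(1\pm b_i)>0$ because $|b_i|<1$. By \eqref{eq:puncture-weighted-estimate}, $|\widetilde U_i|\le C r_i^{\beta_i}$ uniformly on $\mathbb S^2$, including the polar values. Hence $U(0,z)=\ln|z-z_i|+O(1)\to-\infty$.

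The only delicate point is confirming that these expansions, which are stated in neighborhoods of the poles and punctures, are valid uniformly up to the axis directions $\theta=0,\pi$, and that the axis trace agrees with the limit from $\rho>0$. For punctures this follows from the $\max_{\mathbb S^2}$ in \eqref{eq:puncture-weighted-estimate}. For poles it follows from the pointwise remainder bounds in \eqref{eq:lower-pole-U}--\eqref{eq:upper-pole-U} combined with the axis regularity of Proposition~\ref{prop:inputs}\ref{concl:open-axis}.
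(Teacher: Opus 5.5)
Your proposal is correct and follows essentially the same route as the paper. At a degenerate endpoint you read off $U(0,z)=\ln|z-z_*|+O(1)$ from \eqref{eq:puncture-decomposition}--\eqref{eq:bar-U-i}, using $2(1\pm b_i)>0$; at a nondegenerate endpoint you get $U(0,z)=\tfrac12\ln|z-z_*|+O(1)$ from \eqref{eq:lower-pole-U}--\eqref{eq:upper-pole-U}; then axis continuity gives the interior maximum. Your added checks on extending the remainder bounds to the axis trace are fine refinements, and for finiteness of $\bar U_i(0)$ and $\bar U_i(\pi)$ you should also note that $a_i=2|\cJ_i|>0$.
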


\begin{proof} 
Define
\begin{equation}\label{eq:def-endpoint-weight}
 \nu_i=
 \begin{cases}
 1,&i\in\mathcal I_{\mathrm{deg}},\\[2pt]
 \frac12,&i\in\mathcal I_{\mathrm{nd}}.
 \end{cases}
\end{equation}
At the bottom endpoint $z_{j-1}^+$, first suppose that $\mathcal H_{j-1}$ is degenerate.  Approach its puncture along the north axis and use \eqref{eq:puncture-decomposition}--\eqref{eq:bar-U-i}; since the denominator in the $\ln$ in~\eqref{eq:bar-U-i} is $2(1+b_{j-1})>0$, this gives
\[
 f_j(z)=\ln(z-z_{j-1}^+)+O(1).
\]
If $\mathcal H_{j-1}$ is nondegenerate, approach its upper pole along the adjacent north axis.  Formula \eqref{eq:upper-pole-U} gives instead
\[
 f_j(z)=\frac12\ln(z-z_{j-1}^+)+O(1).
\]
In summary, we have 
\begin{align*}
 f_j(z)=\nu_{j-1}\ln(z-z_{j-1}^+)+O(1)
 \rightarrow-\infty
 \quad\text{as }z\downarrow z_{j-1}^+.
\end{align*}
The top endpoint behavior is obtained similarly; namely, 
\begin{align*}
 f_j(z)=\nu_j\ln(z_j^--z)+O(1)
 \rightarrow-\infty
 \quad\text{as }z\uparrow z_j^-.
\end{align*}  
Axis regularity gives continuity in the interior.  The two endpoint limits force an interior maximum.
\end{proof}

\begin{proof}[Proof of Theorem~\ref{thm:main}]
Let
\[
 M=\max_{1\leq j\leq N+1}\mathbf b_j
 =\max\{0,\mathbf b_2,\ldots,\mathbf b_N\}\geq0.
\]
Proposition~\ref{prop:global-bound} gives
\begin{equation}\label{eq:alpha-below-M-final}
 \alpha\leq M
 \quad\text{throughout }\{\rho>0\}.
\end{equation}
By Lemma~\ref{lem:endpoint-blowdown}, the trace of $U$ on any bounded axis rod $\Gamma_j$, for $2\le j\le N$, has an interior maximum, hence by Proposition~\ref{prop:strict-rod}, $\alpha$ cannot achieve its maximum on any bounded axis rod. Because $\alpha=0$ on the unbounded axis rods, it follows that $M=0$. Applying Proposition~\ref{prop:strict-rod} again immediately shows that for any bounded axis rod $\Gamma_j$ we must have $\mathbf{b}_j<M=0$.
\end{proof}

\end{document}